\documentclass[english]{IEEEtran}

\usepackage{graphicx}
\newtheorem{theorem}{Theorem}
\newtheorem{lemma}{Lemma}

\newtheorem{case}{Case}
\newtheorem{example}{Example}

\newenvironment{proof}[1][Proof]{\begin{trivlist}
\item[\hskip \labelsep {\bfseries #1}]}{\end{trivlist}}

\usepackage{subfigure}
\usepackage{dsfont}


%

%
\usepackage{cite}

%
\ifCLASSINFOpdf
\else
\fi
%
%

%
\usepackage{amsmath}

\usepackage{amsfonts}
\usepackage{acronym}

\usepackage{array}

\usepackage{mdwmath}
\usepackage{mdwtab}


\usepackage{eqparbox}

\begin{document}
%
\title{Optimization of Signal-to-Noise-and-Distortion Ratio for Dynamic Range Limited Nonlinearities}
%
%
%

\author{Kai~Ying,
        Zhenhua~Yu,~\IEEEmembership{Student Member,~IEEE,}
        Robert~J.~Baxley,~\IEEEmembership{Senior Member,~IEEE,}
        and~G.~Tong~Zhou,~\IEEEmembership{Fellow,~IEEE}
\thanks{Kai Ying, Zhenhua Yu and G. Tong Zhou are with the School
of Electrical and Computer Engineering, Georgia Institute of Technology, Atlanta,
GA, 30332 USA (e-mail: kying3@gatech.edu).}
\thanks{Robert J. Baxley is with the Georgia Tech Research Institute.}}%

\maketitle

\begin{abstract}
Many components used in signal processing and communication applications, such as power amplifiers and analog-to-digital converters, are nonlinear and have a finite dynamic range.
The nonlinearity associated with these devices distorts the input, which can degrade the overall system performance.
Signal-to-noise-and-distortion ratio (SNDR) is a common metric to quantify the performance degradation. One way to mitigate nonlinear distortions is by maximizing the SNDR.
In this paper, we analyze how to maximize the SNDR of the nonlinearities in optical wireless communication (OWC) systems.
Specifically, we answer the question of how to optimally predistort a double-sided memory-less nonlinearity that has both a ``turn-on'' value and a maximum ``saturation'' value.
We show that the SNDR-maximizing response given the constraints is a double-sided limiter with a certain linear gain and a certain bias value. Both the gain and the bias are functions of the probability density function (PDF) of the input signal and the noise power.
We also find a lower bound of the nonlinear system capacity, which is given by the SDNR and an upper bound determined by dynamic signal-to-noise ratio (DSNR).
An application of the results herein is to design predistortion linearization of nonlinear devices like light emitting diodes (LEDs).

\end{abstract}

\begin{IEEEkeywords}
Nonlinear distortion, dynamic range, clipping, predistortion, optical wireless communication.
\end{IEEEkeywords}

%
\IEEEpeerreviewmaketitle

\section{Introduction}
%
%
%
%
%
%

In addition to being nonlinear, many components in a signal processing or communication system have a dynamic range constraint.
For example, light emitting diodes (LEDs) are dynamic range constrained devices that appear in intensity modulation  (IM) and direct detection (DD) based optical wireless communication (OWC) systems~\cite{IEEE_MAG}\cite{SP01}.
To drive an LED, the input electric signal must be positive and exceed the turn-on voltage of the device.
On the other hand, the signal is also limited by the saturation point or maximum permissible value of the LED.
Thus, the dynamic range constraint can be modeled as two-sided clipping.
The same situation may happen in other applications such as digital audio processing~\cite{Audio}.

Both nonlinearity and clipping result in distortions which may cause system performance degradation.
SNDR is a commonly used metric to quantify the distortion that is uncorrelated to the signal~\cite{RF1}-\cite{RF4}.
Previous work in this area mainly concentrated on a family of amplitude-limited nonlinearities that is common in radio frequency (RF) system design involving nonlinear components such as power amplifiers (PAs) and mixers.

Different from the previous work, our study discusses the class of nonlinearities with a two-sided dynamic range constraint that is more commonly found in optical and acoustic systems.
Authors in~\cite{VLC1}-\cite{VLC4} illustrated the impact of LED nonlinearity and clipping noise in OWC systems.
Some predistortion strategies were proposed in~\cite{VLC5}-\cite{VLC7}.
However, to the best of our knowledge, the optimal nonlinear mapping under the two-sided dynamic range constraint has not been studied.

There are two major differences from the amplitude-limited nonlinearity.
First, the signal will be subject to turn-on clipping and saturation clipping to meet the dynamic range constraint.
Second, DC biasing must be used to shift the signal to an appropriate level to minimize distortion.
In this paper, we will show that the ideal linearizer that maximizes the SNDR is a double-sided limiter that has an affine response. The parameters of the response can be calculated from the distribution of the input signal and the noise power.

In additional to deriving the SNDR-optimal predistorter, we also relate a lower bound on channel capacity to the SNDR, further motivating the SNDR considerations.
Finally, we employ another common distortion metric, dynamic signal-to-noise ratio (DSNR) to provide an upper bound on the double-sided clipping channel.

The remainder of this paper is organized as follows:
Section II introduces the system model for dynamic range limited nonlinearity and the corresponding SNDR definition.
In Section III, we derive the optimal nonlinear mapping that maximizes the SNDR and illustrate some examples.
In section IV, we related the SNDR to the capacity of the nonlinear channel.
Finally, Section VII concludes the paper.
The detailed proofs of this paper are deferred to the Appendices.

\section{System Model and SNDR Definition}

\subsection{System Model}

Let us consider a system modeled by
\begin{equation}
y_o(t)=h_o(x_o(t)) + v(t)
\end{equation}
where $x_o(t)$ is a real-valued signal with mean $\mu_{x}$ and variance $\sigma_x^2$;
$v(t)$ is a zero-mean additive noise process with variance $\sigma_v^2$;
$h_o(\cdot)$ is a memoryless nonlinear mapping with dynamic range constraint $A_1 \leq h_o(x_o(t)) \leq A_2$.

For notational simplicity, we omit the $t$-dependence in the memoryless system and replace $h_o(\cdot)$ and $x_o(t)$ by $h(\cdot)=h_o(\cdot)-A_1$ and $x=x_o-\mu_{x}$.
Then we have an equivalent system modeled by
\begin{equation}\label{mapping}
y=h(x) + v
\end{equation}
where $h(\cdot)$ is a memoryless nonlinear mapping with dynamic range constraint $0 \leq h(x) \leq A=A_2-A_1$ and $x$ is a zero-mean signal with variance $\sigma_x^2$.

\subsection{SNDR Definition}

According to Bussgang's Theorem~\cite{bussgang}, the nonlinear mapping in (\ref{mapping}) can be decomposed as
\begin{equation}
h(x) = \alpha x+d
\end{equation}
where $d$ is the distortion caused by $h(\cdot)$ and $\alpha$ is a constant, selected so that $d$ is uncorrelated with $x$, i.e., $E[xd]=0$.
Thus

\begin{equation}
\alpha = \frac{E[xh(x)]-E[xd]}{E[x^2]}=\frac{E[xh(x)]}{E[x^2]}=\frac{E[xh(x)]}{\sigma_x^2}.
\end{equation}
The distortion power is given by
\begin{equation}
\begin{split}
\varepsilon_d &= E[d^2] - (E[d])^2\\
&= E[h^2(x)]-\alpha^2\sigma_x^2-E^2[h(x)].
\end{split}
\end{equation}
The signal-to-noise-and-distortion ratio (SNDR) is defined as
\begin{equation}\label{SNDRh}
\begin{split}
\mathrm{SNDR} &= \frac{\alpha^2\sigma_x^2}{\varepsilon_d + \sigma_v^2}\\
&= \frac{(E[xh(x)])^2/\sigma_x^2}{E[h^2(x)]-(E[xh(x)])^2/\sigma_x^2-E^2[h(x)] + \sigma_v^2}.
\end{split}
\end{equation}

The definition of SNDR here is a little bit different from that in~\cite{RF4}, because all the signals are real and the distortion contains DC biasing.
Thus, the distortion power is modeled as variance rather than the secondary moment.

We see from (\ref{SNDRh}) that the SNDR is related to the distribution of $x$, the noise power $\sigma_v^2$ and the nonlinear mapping $h(\cdot)$.
Our aim in the next section is to determine the function $h(\cdot)$ that maximizes the SNDR given a signal distribution and the two-sided clipping constraint.

\section{SNDR Optimization and Examples}

\subsection{Optimization of SNDR}
Similar to~\cite{RF4}, let us use a function $g(\cdot)$ to normalize the nonlinear mapping $h(\cdot)$:
\begin{equation}\label{gfunction}
h(x) = Ag\left(\frac{x}{\sigma_x}\right)
\end{equation}
where $0 \leq g(\cdot) \leq 1$. Let $\gamma = x/\sigma_x$ and substitute (\ref{gfunction}) into
(\ref{SNDRh}), we obtain
\begin{equation}\label{SNDRg}
\begin{split}
\mathrm{SNDR}&=\frac{E^2[\gamma g(\gamma)]}{E[g^2(\gamma)]-E^2[\gamma g(\gamma)]-E^2[g(\gamma)] + \sigma_v^2/A^2}\\
&=\frac{E^2[\gamma g(\gamma)]}{var[g(\gamma)]-E^2[\gamma g(\gamma)] + \sigma_v^2/A^2}
\end{split}
\end{equation}
where $var[g(\gamma)]$ is the variance of $g(\gamma)$ and $var[g(\gamma)]=E[g^2(\gamma)]-E^2[g(\gamma)]$.

The SNDR optimization problem can be stated as follows:
\begin{eqnarray}
 &\smash{\displaystyle\max_{g(\cdot)}}& {\mathrm{SNDR}}\\
 &s.t.&  0 \leq g(\cdot) \leq 1
\end{eqnarray}
for a given distribution of $\gamma$, dynamic range $A$ and noise power $\sigma_v^2$.

\begin{figure}[htbp]
\centering
\includegraphics[width=3.5in]{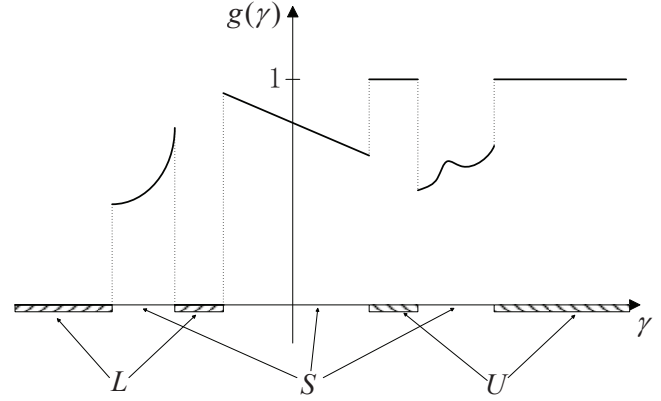}
\caption{An example of nonlinear mapping $g(\cdot)$ that satisfies the $0 \leq g(\cdot) \leq 1$ constraint.}
\label{gexample}
\end{figure}

Fig.~\ref{gexample} illustrates an example of the $g(\cdot)$.
The region of $\gamma$ is divided into three sets $L$, $S$ and $U$.
\begin{eqnarray}
g(\gamma) = 0, \quad \mathrm{for} \quad \gamma \in L;\\
0 < g(\gamma) < 1, \quad \mathrm{for} \quad \gamma \in S; \label{Sdefinition}\\
g(\gamma) = 1, \quad \mathrm{for} \quad \gamma \in U.
\end{eqnarray}
Thus, to determine a nonlinear mapping $g(\cdot)$, we need to find the sets $L$, $S$, $U$ and the shape of the function $g(\cdot)$ in $S$.

We will solve this problem with the following steps:

\begin{enumerate}
    \item find the optimal $g(\cdot)$ given $L$, $S$, $U$;
    \item show that $S$ should be as large as possible;
    \item determine $L$ and $U$ for the optimal solution.
\end{enumerate}

\begin{lemma}
Assume that the sets $L$, $S$ and $U$ are known, and $L\cup S\cup U = R$.
The $g(\cdot)$ function that maximizes the SNDR expression in (\ref{SNDRg}) is of the form
\begin{equation}
g(\gamma) = \frac{\gamma}{\eta} + \beta
\end{equation}
where

\begin{equation}\label{etan}
\begin{split}
\eta &= \frac{C_0^UC_1^S + C_1^U - C_0^SC_1^U}{C_0^U-C_0^UC_0^S-(C_0^U)^2 + (1-C_0^U)\sigma_v^2/A^2}\\
&= \frac{C_0^UC_1^S + C_1^U - C_0^SC_1^U}{C_0^UC_0^L + (1-C_0^S)\sigma_v^2/A^2},
\end{split}
\end{equation}

\begin{equation}\label{beta}
\beta = \frac{C_0^UC_1^S + C_0^UC_1^U + C_1^S\sigma_v^2/A^2}{C_0^UC_1^S + C_1^U -C_0^SC_1^U}
\end{equation}
with

\begin{equation}\label{Cdenotation}
C_{num}^{set} = E[\gamma^{num} I_{set}(\gamma)]
\end{equation}
and $I_{set}(\gamma)$ is the indicator function:
\begin{equation}
\begin{split}
I_{set}(\gamma)=\quad
\begin{cases}
\,\,1, & \text{if } \gamma \in set, \\
\,\,0, & \text{otherwise}.
\end{cases}
\end{split}
\end{equation}
This lemma holds if and only if $S$ satisfies $0 < \frac{\gamma}{\eta} + \beta < 1$ for all $\gamma \in S$.
\end{lemma}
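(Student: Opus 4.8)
The plan is to treat the lemma as a constrained calculus-of-variations problem. Since $g$ is pinned to $0$ on $L$ and to $1$ on $U$, the only freedom is the shape of $g$ on the interior region $S$, where (\ref{Sdefinition}) guarantees $0<g<1$ strictly. I would proceed in three stages: reduce the objective to a function of a few scalar functionals of $g$, derive the pointwise optimality condition on $S$ to obtain the affine form, and then solve the resulting self-consistency equations for $\eta$ and $\beta$.

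First I would set $a=E[\gamma g(\gamma)]$, $b=E[g(\gamma)]$ and $c=E[g^2(\gamma)]$, so that by (\ref{SNDRg}) the objective reads $\mathrm{SNDR}=N/D$ with $N=a^2$ and $D=c-b^2-a^2+\sigma_v^2/A^2$. Because $g$ is interior on $S$, any admissible variation $\delta g$ supported on $S$ must leave the ratio stationary, i.e.\ $D\,\delta N-N\,\delta D=0$. Using $\delta a=E[\gamma\,\delta g]$, $\delta b=E[\delta g]$ and $\delta c=2E[g\,\delta g]$ and dividing out the common factors, the pointwise condition collapses to
\begin{equation}
g(\gamma)=\frac{c-b^2+\sigma_v^2/A^2}{a}\,\gamma+b,\qquad \gamma\in S.
\end{equation}
This is exactly the claimed affine form, with $1/\eta=(c-b^2+\sigma_v^2/A^2)/a$ and $\beta=b$; the linearity is forced because the coefficient of $\gamma$ equals $(D+a^2)/a$, which is independent of the point $\gamma$.

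The remaining work, and the step I expect to be the main obstacle, is purely algebraic. Substituting $g(\gamma)=\gamma/\eta+\beta$ back into $a,b,c$ over $S$ and adding the fixed contributions of $U$ gives, in the notation (\ref{Cdenotation}), $a=C_2^S/\eta+\beta C_1^S+C_1^U$, $b=C_1^S/\eta+\beta C_0^S+C_0^U$ and $c=C_2^S/\eta^2+2\beta C_1^S/\eta+\beta^2 C_0^S+C_0^U$. The self-consistency relations $\beta=b$ and $1/\eta=(c-b^2+\sigma_v^2/A^2)/a$ then form two equations in $\eta$ and $\beta$ that still contain the unwanted moment $C_2^S$. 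I would eliminate $C_2^S$ by solving the expression for $a$ and inserting it into $c$; I anticipate two decisive cancellations, namely that the $C_2^S$ terms drop out entirely and, once $b=\beta$ is imposed, the $\beta^2$ terms cancel on both sides, leaving a \emph{linear} equation in $\beta$. Solving it yields (\ref{beta}), and back-substituting into the rearranged relation $\beta(1-C_0^S)-C_0^U=C_1^S/\eta$ yields (\ref{etan}), the two displayed denominators being identical via $1-C_0^S-C_0^U=C_0^L$. Finally, the ``if and only if'' clause is immediate from the construction: the derivation presumed $g$ strictly interior on all of $S$, so the affine expression is the optimizer exactly when $0<\gamma/\eta+\beta<1$ holds for every $\gamma\in S$; were it to fall to $0$ or rise to $1$ somewhere, that point would belong to $L$ or $U$ rather than $S$. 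The one subtlety I would still verify is that this stationary point is a genuine maximizer, which I would confirm by a second-variation argument along admissible directions.
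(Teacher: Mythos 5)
Your proposal is correct and follows essentially the same route as the paper's Appendix~A proof: a first-variation (functional-derivative) stationarity condition restricted to $S$, which forces the affine form because the coefficient of $\gamma$ is a constant functional of $g$, followed by substituting $g(\gamma)=\gamma/\eta+\beta$ back into the moment identities and performing exactly the cancellations you anticipate (the $C_2^S$ terms drop, the $\beta^2$ terms are eliminated via $\beta(1-C_0^S)-C_0^U=C_1^S/\eta$) to reach (\ref{beta}) and (\ref{etan}). The only cosmetic differences are that the paper phrases the variation with Dirac-delta test functions rather than general perturbations supported on $S$, and your algebra yields the $(1-C_0^S)\sigma_v^2/A^2$ denominator, confirming the second line of (\ref{etan}) (the $(1-C_0^U)$ in its first line is evidently a typo).
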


\begin{proof}
See Appendix~\ref{L1proof}.
\end{proof}

This result rules out the $g(\cdot)$ functions whose shape over $S$ is nonlinear.
Fig~\ref{fig2:subfig} demonstrates examples of $g(\cdot)$ functions that may satisfy \emph{Lemma 1}.
Here, the slope of the linear curve in $S$ can be either positive or negative.

\begin{figure}[htbp]
\centering
\subfigure[$\eta > 0$]{
\label{fig2:subfig:a} 
\includegraphics[width=3in]{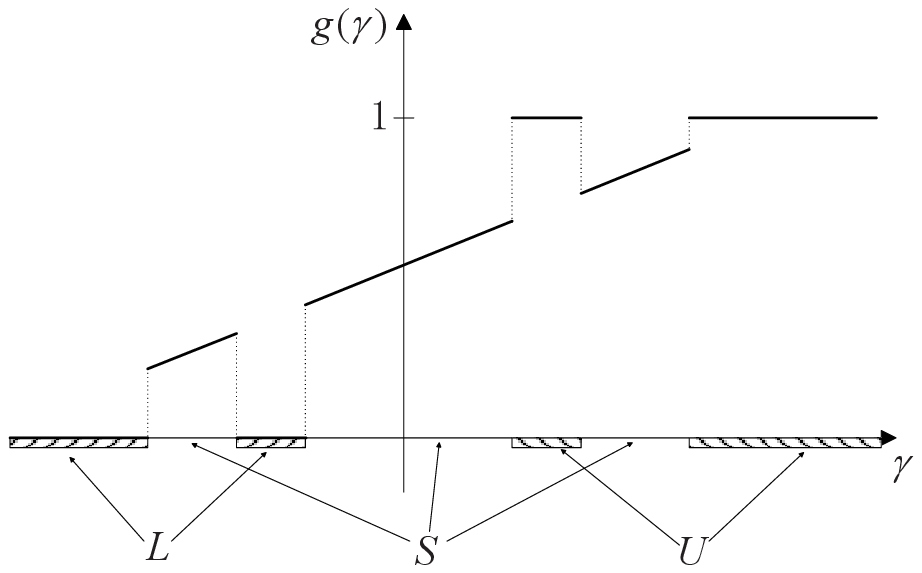}}
\hspace{0.2in}
\subfigure[$\eta < 0$]{
\label{fig2:subfig:b} 
\includegraphics[width=3in]{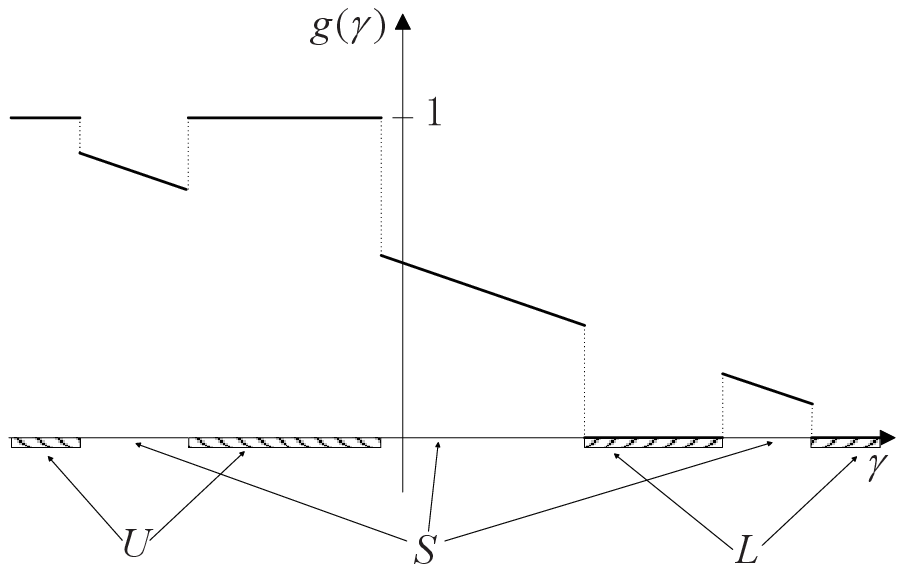}}
\caption{Examples of nonlinear mapping $g(\cdot)$ that may satisfy \emph{Lemma 1}}
\label{fig2:subfig} 
\end{figure}

\emph{Lemma 1} answered the question pertaining to the best shape of the $g(\cdot)$ function with given $L$, $S$ and $U$.
The remaining question is how to determine the optimal sets $L$, $S$ and $U$ so that the SNDR is maximum.
This turns out to be a very challenging problem since we are seeking joint optimization over multiple sets.
Let us consider $S$ first.

\begin{lemma}
Given sets $L$, $S$ and $U$, if $S$ can be enlarged to $S^{\ast}$ such that $S \subset S^{\ast} \subseteq (-\beta^{\ast}\eta^{\ast},\eta^{\ast}-\beta^{\ast}\eta^{\ast})$ or $(\eta^{\ast}-\beta^{\ast}\eta^{\ast},-\beta^{\ast}\eta^{\ast})$, then a higher SNDR can be achieved.
\end{lemma}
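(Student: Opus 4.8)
The plan is to treat the enlargement of the free set $S$ as a \emph{relaxation} of the maximization over $g(\cdot)$, and then to argue that the relaxation is \emph{strict} whenever the stated interval condition holds. First I would make precise what ``given $L$, $S$ and $U$'' means as an optimization problem: fixing the partition amounts to maximizing $\mathrm{SNDR}$ in (\ref{SNDRg}) over all $g(\cdot)$ that are pinned to $0$ on $L$ and to $1$ on $U$, with $0\le g\le 1$ free on $S$. By \emph{Lemma 1} the maximizer is the affine map $g(\gamma)=\gamma/\eta+\beta$ on $S$, and I will denote the resulting optimal value by $\mathrm{SNDR}(S)$; the goal is to show $\mathrm{SNDR}(S^{\ast})>\mathrm{SNDR}(S)$.

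Next I would exhibit the inclusion of feasible sets. Since $S\subset S^{\ast}$ with $S^{\ast}\setminus S\subseteq L\cup U$, I set $L^{\ast}=L\setminus S^{\ast}\subseteq L$ and $U^{\ast}=U\setminus S^{\ast}\subseteq U$, so the enlarged partition pins $g$ on strictly smaller sets. The optimal function for the original partition is $0$ on $L^{\ast}$, $1$ on $U^{\ast}$, and takes values in $[0,1]$ on all of $S^{\ast}$; hence it is feasible for the enlarged problem as well. The enlarged problem is therefore a relaxation of the original one, giving $\mathrm{SNDR}(S^{\ast})\ge \mathrm{SNDR}(S)$ immediately. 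The hypothesis $S^{\ast}\subseteq(-\beta^{\ast}\eta^{\ast},\eta^{\ast}-\beta^{\ast}\eta^{\ast})$, or the mirror interval $(\eta^{\ast}-\beta^{\ast}\eta^{\ast},-\beta^{\ast}\eta^{\ast})$ when $\eta^{\ast}<0$, is exactly the validity condition of \emph{Lemma 1}, so the maximizer of the enlarged problem is again the affine map $g(\gamma)=\gamma/\eta^{\ast}+\beta^{\ast}$, now lying strictly inside $(0,1)$ on the whole of $S^{\ast}$.

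To upgrade ``$\ge$'' to the strict ``$>$'' asserted in the lemma, I would compare the two maximizers on the newly freed region $S^{\ast}\setminus S$, which we may assume has positive probability (otherwise the enlargement changes nothing). On that region the inherited optimizer is pinned at $0$ or $1$, whereas the enlarged optimizer $\gamma/\eta^{\ast}+\beta^{\ast}$ lies strictly inside $(0,1)$ by the interval hypothesis. The two functions thus disagree on a set of positive measure, so the inherited function violates the first-order stationarity condition that characterizes the \emph{Lemma 1} optimizer for the enlarged partition; consequently it cannot attain $\mathrm{SNDR}(S^{\ast})$, and the inequality is strict.

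The main obstacle I anticipate is precisely this strictness step, since the relaxation yields ``$\ge$'' almost for free while ruling out equality requires knowing the enlarged optimizer is genuinely better than the inherited one. I would secure this either by establishing that the \emph{Lemma 1} maximizer is the \emph{unique} stationary point of (\ref{SNDRg}) under a fixed partition, so that any function disagreeing with it on a positive-measure set is strictly suboptimal, or, if uniqueness is awkward to state cleanly, by a direct first-variation argument: perturb the inherited optimizer by relaxing a small sub-interval $\Delta\subseteq S^{\ast}\setminus S$ away from its pinned value toward $\gamma/\eta^{\ast}+\beta^{\ast}$, and show the induced directional derivative of $\mathrm{SNDR}$ is strictly positive, its sign being guaranteed by $0<\gamma/\eta^{\ast}+\beta^{\ast}<1$. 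The two cases $\eta^{\ast}>0$ and $\eta^{\ast}<0$ are handled identically after reflecting $\gamma$, which accounts for the two intervals appearing in the statement.
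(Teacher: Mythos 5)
Your opening move is correct and genuinely different from the paper's: once the problem ``given $L$, $S$, $U$'' is phrased as maximizing (\ref{SNDRg}) over all $g$ that are $0$ on $L$, $1$ on $U$, and in $[0,1]$ on $S$, the feasible set for the enlarged partition contains that of the original one, so $\mathrm{SNDR}(S^{\ast})\ge\mathrm{SNDR}(S)$ is immediate. The paper obtains this monotonicity the hard way: it writes $\mathrm{SNDR}^{-1}=1/R-1$ with $R=C_2^S+\eta C_1^U+\eta\beta C_1^S$, expands $R$ as an explicit ratio of polynomials in the moments $C_{num}^{set}$, and shows via the Cauchy--Schwartz bound $(C_1^{\Delta})^2\le C_0^{\Delta}C_2^{\Delta}$ and a completion of squares (its \emph{Case 1} and \emph{Case 2}, culminating in (\ref{com1}) and (\ref{com2})) that transferring any chunk $\Delta$ from $S$ into $L$ or into $U$ can only lower $R$. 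Up to this point your route is shorter and needs weaker hypotheses.

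The gap is in the strictness step, which in your proof is where all of the content lives (the weak inequality being free, the lemma \emph{is} the strict one). First, the inference ``the inherited optimizer disagrees with the affine map on a set of positive measure, hence violates stationarity, hence cannot attain $\mathrm{SNDR}(S^{\ast})$'' is invalid for a constrained problem: the inherited function sits on the \emph{boundary} of the feasible set ($g_0\in\{0,1\}$ on $S^{\ast}\setminus S$), and a boundary maximizer need not be a stationary point of the unconstrained functional; it need only have nonpositive directional derivatives along feasible directions. So you are forced into your second route, the direct first variation---but there the sign you invoke is controlled by the wrong parameters. In the notation of Appendix~\ref{L1proof}, differentiating the SNDR at the inherited optimizer $g_0$ gives, at any point $\gamma_0$,
\begin{equation*}
\frac{\delta\,\mathrm{SNDR}}{\delta g(\gamma_0)}\bigg|_{g_0}
=\frac{2p(\gamma_0)\,Q^2[g_0]}{D^2[g_0]}\left(\frac{\gamma_0}{\eta}+\beta-g_0(\gamma_0)\right),
\end{equation*}
where $\eta$ and $\beta$ are the \emph{Lemma 1} parameters of the \emph{original} partition $(L,S,U)$, not the starred ones. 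Hence an upward perturbation on $\Delta\subseteq S^{\ast}\cap L$ (where $g_0=0$) increases the SNDR only if $\gamma_0/\eta+\beta>0$ there, i.e.\ only if $\Delta$ lies inside the \emph{unstarred} interval $(-\beta\eta,\eta-\beta\eta)$; the lemma's hypothesis places $S^{\ast}$ inside the \emph{starred} interval $(-\beta^{\ast}\eta^{\ast},\eta^{\ast}-\beta^{\ast}\eta^{\ast})$, whose endpoints are defined self-consistently by the enlarged partition, and nothing in your argument relates the two. The paper's global algebraic comparison sidesteps this entirely: it needs no interval condition on the inherited parameters, and strictness (modulo the paper's own looseness, since its chains end in ``$\le$'') comes from strictness of Cauchy--Schwartz when $\gamma$ is non-degenerate on $S^{\ast}\setminus S$. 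To close your proof you must either show that the stated hypothesis implies $S^{\ast}\setminus S\subseteq(-\beta\eta,\eta-\beta\eta)$, or replace the local perturbation by a global comparison of the two optima, which is essentially what the paper's computation does.
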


\begin{proof}
See Appendix~\ref{L2proof}.
\end{proof}

Fig.~\ref{fig3:subfig} shows how \emph{Lemma 2} works.
$S$ can be enlarged by occupying the subsets of $L$ and $U$.
The larger the set $S$, the better the SNDR that can be achieved.
Just as \emph{Lemma 1}, \emph{Lemma 2} holds if and only if $S^{\ast}$ satisfies
$0 < \frac{\gamma}{\eta^{\ast}} + \beta^{\ast} < 1$ for all $\gamma \in S^{\ast}$,
that is, $S^{\ast} \subseteq (-\beta^{\ast}\eta^{\ast},\eta^{\ast}-\beta^{\ast}\eta^{\ast})$
or $(\eta^{\ast}-\beta^{\ast}\eta^{\ast},-\beta^{\ast}\eta^{\ast})$.

\begin{figure}[htbp]
\centering
\subfigure[$L$, $S$, $U$]{
\label{fig3:subfig:a} 
\includegraphics[width=3in]{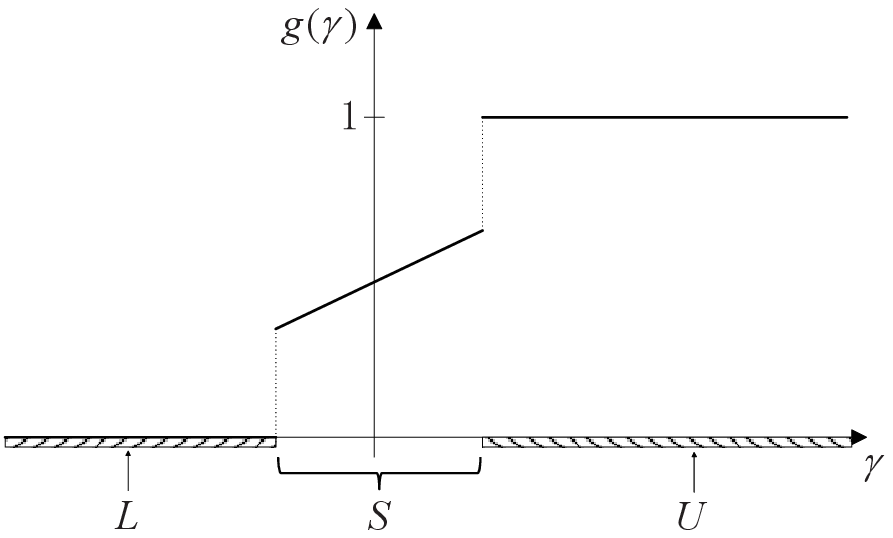}}
\hspace{0.2in}
\subfigure[$L^{\ast}=L-\Delta{L}$, $S^{\ast}=S+\Delta{L}+\Delta{U}$, $U^{\ast}=U-\Delta{U}$]{
\label{fig3:subfig:b} 
\includegraphics[width=3in]{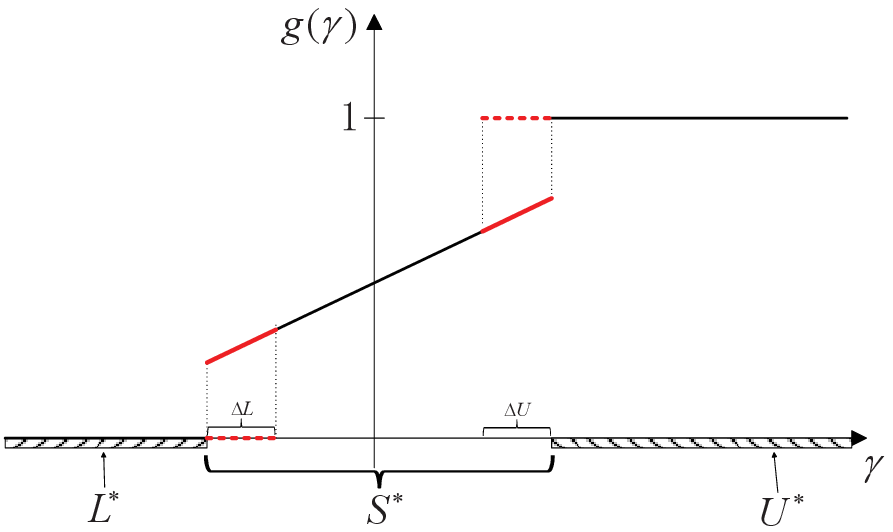}}
\caption{Illustration of \emph{Lemma 2}}
\label{fig3:subfig} 
\end{figure}

Even with the set $S$ determined, we still need to determine $L$ and $U$.

\begin{lemma}
If $\eta>0$, the $g(\cdot)$ that maximizes the SNDR satisfies $L \subset \mathds{R}^-$ and $U \subset \mathds{R}^+$;
if $\eta<0$, the $g(\cdot)$ that maximizes the SNDR satisfies $L \subset \mathds{R}^+$ and $U \subset \mathds{R}^-$.
\end{lemma}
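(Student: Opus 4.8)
The plan is to attack the lemma through the first-order optimality conditions for $\max_g \mathrm{SNDR}$ subject to $0\le g\le 1$, since these conditions locate $L$, $S$ and $U$ simultaneously and already encode the ordering the lemma asserts. Write $\mathrm{SNDR}=N/D$ with $N=E^2[\gamma g]$ and $D=var[g]-E^2[\gamma g]+\sigma_v^2/A^2$, so that $D+N=var[g]+\sigma_v^2/A^2>0$. I first record the sign fact that drives everything: because $E[\gamma]=0$, $E[\gamma g]=\mathrm{Cov}(\gamma,g(\gamma))$, and since $g$ is monotone on $S$ with slope $1/\eta$, the association inequality gives $E[\gamma g]>0$ when $\eta>0$ and $E[\gamma g]<0$ when $\eta<0$.

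Next I would compute the functional derivative of $N/D$ and impose stationarity for variations supported in the interior of $S$. The condition $D\,\delta N-N\,\delta D=0$ collapses, after cancelling the common factor and using $D+N=var[g]+\sigma_v^2/A^2$, to $g(\gamma)=E[g]+\big((D+N)E[\gamma g]/N\big)\gamma$ on $S$. Matching this against the form $g(\gamma)=\gamma/\eta+\beta$ from \emph{Lemma 1} recovers the slope and, crucially, identifies the intercept as $\beta=E[g]$. Since $0\le g\le 1$ holds pointwise, this forces $0\le\beta=E[g]\le 1$, and this single inequality is what ultimately straddles the origin.

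To place $L$ and $U$, I would use complementary slackness. Where $g=0$ (the set $L$) the derivative may only point inward, so it must be $\le 0$, giving $(D+N)E[\gamma g]\gamma+NE[g]\le 0$; where $g=1$ (the set $U$) it must be $\ge 0$, giving $(D+N)E[\gamma g]\gamma-N(1-E[g])\ge 0$. For $\eta>0$ we have $E[\gamma g]>0$, so these read $\gamma\le-\beta\eta$ on $L$ and $\gamma\ge\eta(1-\beta)$ on $U$; with $\beta\in[0,1]$ and $\eta>0$ one has $-\beta\eta\le 0\le\eta(1-\beta)$, whence $L\subset\mathds{R}^-$ and $U\subset\mathds{R}^+$. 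For $\eta<0$ the sign $E[\gamma g]<0$ reverses both inequalities to $\gamma\ge-\beta\eta$ on $L$ and $\gamma\le\eta(1-\beta)$ on $U$, and now $\eta(1-\beta)\le 0\le-\beta\eta$, so $L\subset\mathds{R}^+$ and $U\subset\mathds{R}^-$, which is the second assertion.

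The routine part is the derivative computation; the hard part will be the sign bookkeeping together with justifying the intercept identity $\beta=E[g]$. One must remember that $N/D$ is not concave, so these are necessary conditions satisfied by the (assumed) maximizer rather than sufficient ones, and one must track how dividing by $E[\gamma g]$ flips every inequality as $\eta$ changes sign. As an elementary cross-check of the ordering alone, I would also note a mass-swap argument: for $\eta>0$, exchanging equal probability mass between a point $\gamma_L$ where $g=0$ and a point $\gamma_U$ where $g=1$ leaves $E[g]$ and $var[g]$ fixed while changing $E[\gamma g]$ by $(\gamma_L-\gamma_U)\,dm$, so the SNDR strictly increases unless $\gamma_L\le\gamma_U$; this reproves that $L$ lies entirely to the left of $U$ without any differentiability assumption.
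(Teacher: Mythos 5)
Your proof is correct in its main line, but it takes a genuinely different route from the paper's. The paper disposes of this lemma with a short, figure-based rearrangement argument: for $\eta>0$, if a piece $\Delta L$ of $L$ sits in $\mathds{R}^+$ or a piece $\Delta U$ of $U$ sits in $\mathds{R}^-$, the paper asserts that $E^2[\gamma g(\gamma)]$ is decreased and $var[g(\gamma)]$ is increased relative to the correctly ordered configuration, so the SNDR drops; no formula-level verification is given. You instead impose the full first-order (KKT) conditions of the box-constrained problem: stationarity on $S$ recovers the affine form and the intercept identity $\beta=E[g(\gamma)]$ (this is exactly the paper's own equation $\beta=E[g(\gamma)]$ in the Lemma 1 appendix, so it needs no separate justification), and complementary slackness at points of $L$ and $U$ gives $\gamma\le-\beta\eta$ on $L$ and $\gamma\ge\eta(1-\beta)$ on $U$ after dividing by $E[\gamma g]$ with the correct sign; the pointwise bound $0\le\beta=E[g(\gamma)]\le 1$ then places the origin between the two thresholds. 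This buys you more than the paper's proof: it is rigorous where the paper is pictorial, and it produces the exact clipping thresholds $-\beta\eta$ and $\eta-\beta\eta$, i.e., essentially the characterization of $L^{\star}$, $S^{\star}$, $U^{\star}$ that the paper only assembles afterwards by combining Lemmas 1--3 into Theorem 1. Your equal-mass swap is the argument closest in spirit to the paper's own proof and is in fact cleaner (an equal-mass exchange leaves $E[g(\gamma)]$ and $var[g(\gamma)]$ exactly invariant, whereas the paper's variance claim is only loosely justified), though, as you correctly note, it only orders $L$ to the left of $U$ and cannot by itself locate the origin.

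One justification must be replaced because it is circular: you invoke the association (Chebyshev) inequality to conclude $E[\gamma g(\gamma)]>0$ when $\eta>0$, but that inequality requires $g$ to be monotone on all of $\mathds{R}$, and global monotonicity --- $L$ to the left of $S$ to the left of $U$ --- is precisely what the lemma asserts; monotonicity on $S$ alone says nothing, since $g$ jumps to $0$ on $L$ and to $1$ on $U$ wherever those sets happen to lie. Fortunately the sign fact is an immediate byproduct of your own stationarity step: matching $g(\gamma)=E[g]+\bigl((D+N)E[\gamma g]/N\bigr)\gamma$ against $g(\gamma)=\gamma/\eta+\beta$ gives $1/\eta=(D+N)E[\gamma g]/N$, and since $N=E^2[\gamma g]>0$ this reads
\begin{equation}
\eta=\frac{E[\gamma g(\gamma)]}{var[g(\gamma)]+\sigma_v^2/A^2},
\end{equation}
whose denominator is positive, so $\eta$ and $E[\gamma g(\gamma)]$ automatically share the same sign at any stationary point. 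Simply state the sign fact after the stationarity computation rather than before it, and your argument is complete.
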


\begin{proof}
Let us compare the SNDR between Fig.~\ref{fig4:subfig:a} and Fig.~\ref{fig4:subfig:b}.
For $\eta>0$, if there is a subset $\Delta{L}$ of $L$ in $\mathds{R}^+$ or a subset $\Delta{U}$ of $U$ in $\mathds{R}^-$, which is illustrated in Fig.~\ref{fig4:subfig:b}, then we see that $E^2[\gamma g(\gamma)]$ is decreased while the variance of $g(\gamma)$ is increased.
Thus, the $\mathrm{SNDR}=\frac{E^2[\gamma g(\gamma)]}{var[g(\gamma)]-E^2[\gamma g(\gamma)] + \sigma_v^2/A^2}$ of Fig.~\ref{fig4:subfig:b} is less than the SDNR of Fig.~\ref{fig4:subfig:a}.
Similarly, we can draw the same conclusion for the case with $\eta<0$.
\end{proof}

\begin{figure}[htbp]
\centering
\subfigure[$L \subset \mathds{R}^-$, $U \subset \mathds{R}^+$]{
\label{fig4:subfig:a} 
\includegraphics[width=3in]{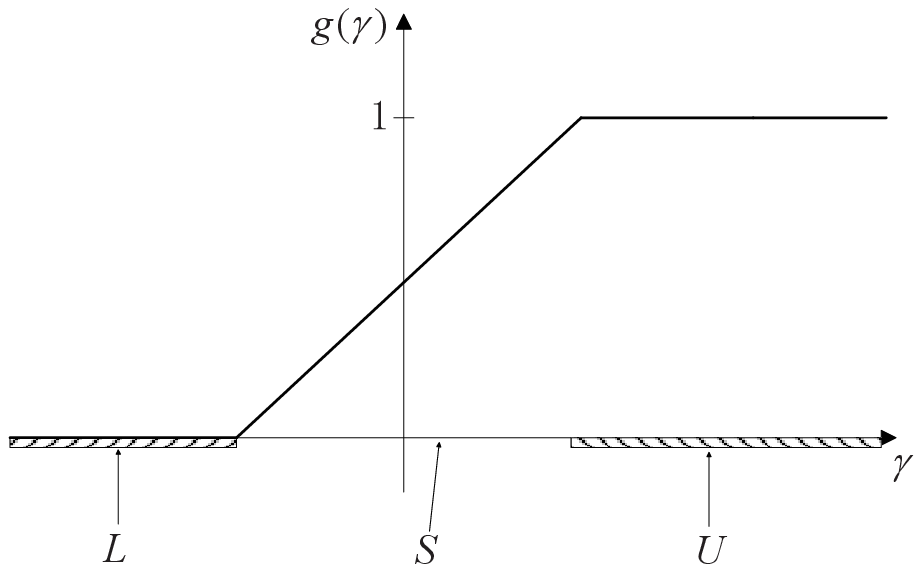}}
\hspace{0.2in}
\subfigure[$\Delta{L} \subset \mathds{R}^+$, $\Delta{U} \subset \mathds{R}^-$]{
\label{fig4:subfig:b} 
\includegraphics[width=3in]{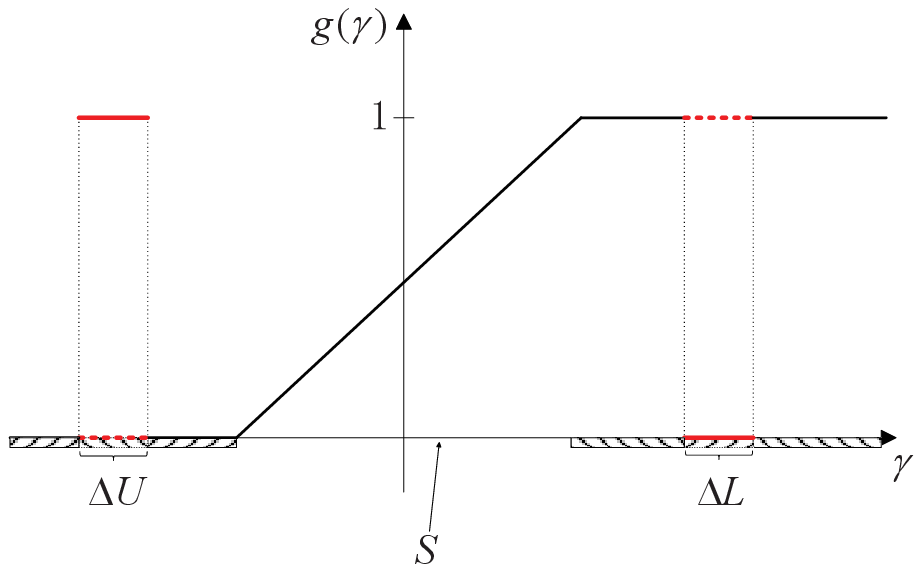}}
\caption{Illustration of \emph{Lemma 3}}
\label{fig4:subfig} 
\end{figure}

In the final analysis, \emph{Lemma 1}, \emph{Lemma 2} and \emph{Lemma 3} imply that the optimal $L$, $S$ and $U$, in the sense of maximizing the SNDR, are $L=(-\infty,-\beta\eta]$, $S=(-\beta\eta,\eta-\beta\eta)$ and $U=[\eta-\beta\eta,+\infty)$ if $\eta>0$;
or $L=[-\beta\eta,+\infty)$, $S=(\eta-\beta\eta,-\beta\eta)$ and $U=[-\infty,\eta-\beta\eta)$ if $\eta<0$.

\begin{theorem}
Within the class of $g(\cdot)$ satisfying $0 \leq g(\cdot) \leq 1$, the following $g(\cdot)$ maximizes the SNDR expression in (\ref{SNDRg}):
\begin{equation}\label{solution1}
\begin{split}
g(\gamma)=\quad
\begin{cases}
\,\,0, &\gamma \leq -\beta^{\star}\eta^{\star}, \\
\,\,\frac{\gamma}{\eta^{\star}}+\beta^{\star}, &-\beta^{\star}\eta^{\star} \leq \gamma \leq \eta^{\star}-\beta^{\star}\eta^{\star},\\
\,\,1, & \gamma \geq \eta^{\star}-\beta^{\star}\eta^{\star}
\end{cases}
\end{split}
\end{equation}
for $\eta^{\star} > 0$, or
\begin{equation}\label{solution2}
\begin{split}
g(\gamma)=\quad
\begin{cases}
\,\,1, &\gamma \leq \eta^{\star}-\beta^{\star}\eta^{\star}, \\
\,\,\frac{\gamma}{\eta^{\star}}+\beta^{\star}, &\eta^{\star}-\beta^{\star}\eta^{\star} \leq \gamma \leq -\beta^{\star}\eta^{\star},\\
\,\,0, & \gamma \geq -\beta^{\star}\eta^{\star}
\end{cases}
\end{split}
\end{equation}
for $\eta^{\star} < 0$, where the $\eta^{\star}$ and $\beta^{\star}$ are found by solving the following transcendental equations:
\begin{equation}\label{eta_star}
\eta^{\star} = \frac{C_0^{U^{\star}}C_1^{S^{\star}} + C_1^{U^{\star}} - C_0^{S^{\star}}C_1^{U^{\star}}}{C_0^{U^{\star}}C_0^{L^{\star}} + (1-C_0^{S^{\star}})\sigma_v^2/A^2},
\end{equation}
\begin{equation}\label{beta_star}
\beta^{\star} = \frac{C_0^{U^{\star}}C_1^{S^{\star}} + C_0^{U^{\star}}C_1^{U^{\star}} + C_1^{S^{\star}}\sigma_v^2/A^2}{C_0^{U^{\star}}C_1^{S^{\star}} + C_1^{U^{\star}} - C_0^{S^{\star}}C_1^{U^{\star}}}
\end{equation}
with

\begin{equation}
\begin{split}
C_0^{U^{\star}}=\quad
\begin{cases}
\,\,\int_{\eta^{\star}-\beta^{\star}\eta^{\star}}^{+\infty}{p(\gamma)d\gamma}, & \text{for } \eta^{\star} > 0, \\
\,\,\int_{-\infty}^{\eta^{\star}-\beta^{\star}\eta^{\star}}{p(\gamma)d\gamma}, & \text{for } \eta^{\star} < 0;
\end{cases}
\end{split}
\end{equation}

\begin{equation}
\begin{split}
C_0^{S^{\star}}=\quad
\begin{cases}
\,\,\int_{-\beta^{\star}\eta^{\star}}^{\eta^{\star}-\beta^{\star}\eta^{\star}}{p(\gamma)d\gamma}, & \text{for } \eta^{\star} > 0, \\
\,\,\int_{\eta^{\star}-\beta^{\star}\eta^{\star}}^{-\beta^{\star}\eta^{\star}}{p(\gamma)d\gamma}, & \text{for } \eta^{\star} < 0;
\end{cases}
\end{split}
\end{equation}

\begin{equation}
\begin{split}
C_0^{L^{\star}}=\quad
\begin{cases}
\,\,\int_{-\infty}^{-\beta^{\star}\eta^{\star}}{p(\gamma)d\gamma}, & \text{for } \eta^{\star} > 0, \\
\,\,\int_{-\beta^{\star}\eta^{\star}}^{\infty}{p(\gamma)d\gamma}, & \text{for } \eta^{\star} < 0;
\end{cases}
\end{split}
\end{equation}

\begin{equation}
\begin{split}
C_1^{U^{\star}}=\quad
\begin{cases}
\,\,\int_{\eta^{\star}-\beta^{\star}\eta^{\star}}^{+\infty}{\gamma p(\gamma)d\gamma}, & \text{for } \eta^{\star} > 0, \\
\,\,\int_{-\infty}^{\eta^{\star}-\beta^{\star}\eta^{\star}}{\gamma p(\gamma)d\gamma}, & \text{for } \eta^{\star} < 0;
\end{cases}
\end{split}
\end{equation}

\begin{equation}
\begin{split}
C_1^{S^{\star}}=\quad
\begin{cases}
\,\,\int_{-\beta^{\star}\eta^{\star}}^{\eta^{\star}-\beta^{\star}\eta^{\star}}{\gamma p(\gamma)d\gamma}, & \text{for } \eta^{\star} > 0, \\
\,\,\int_{\eta^{\star}-\beta^{\star}\eta^{\star}}^{-\beta^{\star}\eta^{\star}}{\gamma p(\gamma)d\gamma}, & \text{for } \eta^{\star} < 0
\end{cases}
\end{split}
\end{equation}
and $p(\gamma)$ is the probability density function (PDF) of $\gamma$. The optimal SNDR is found as
\begin{equation}
\mathrm{SNDR}^{\star}=\frac{1}{\frac{1}{R(\eta^{\star}, \beta^{\star})}-1}
\end{equation}
where
\begin{equation}
R(\eta^{\star}, \beta^{\star})= C_2^{S^{\star}} + \eta^{\star} C_1^{U^{\star}} + \eta^{\star} \beta^{\star} C_1^{S^{\star}}
\end{equation}
and
\begin{equation}
\begin{split}
C_2^{S^{\star}}=\quad
\begin{cases}
\,\,\int_{-\beta^{\star}\eta^{\star}}^{\eta^{\star}-\beta^{\star}\eta^{\star}}{\gamma^2 p(\gamma)d\gamma}, & \text{for } \eta^{\star} > 0, \\
\,\,\int_{\eta^{\star}-\beta^{\star}\eta^{\star}}^{-\beta^{\star}\eta^{\star}}{\gamma^2 p(\gamma)d\gamma}, & \text{for } \eta^{\star} < 0.
\end{cases}
\end{split}
\end{equation}

\end{theorem}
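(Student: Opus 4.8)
The plan is to assemble \emph{Lemma 1}, \emph{Lemma 2}, and \emph{Lemma 3} into a complete characterization of the maximizer, and then evaluate the optimal SNDR in closed form. The three lemmas are deliberately staged to match the three-step program announced before \emph{Lemma 1}: the first fixes the \emph{shape} of $g(\cdot)$, the second and third fix the \emph{sets}. Since \emph{Lemma 2} and \emph{Lemma 3} are phrased as strict-improvement arguments rather than mere stationarity conditions, any feasible $g(\cdot)$ that is not already of the claimed double-sided-limiter form can be strictly improved, so the optimum must lie in this family and only the parameters $(\eta^\star,\beta^\star)$ remain to be pinned down.

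First I would invoke \emph{Lemma 1}: for any fixed triple $L,S,U$, the SNDR-maximizing $g(\cdot)$ is affine on $S$, namely $g(\gamma)=\gamma/\eta+\beta$ with $\eta,\beta$ given by (\ref{etan}) and (\ref{beta}). This eliminates every nonlinear candidate, leaving only the choice of the three sets. Next, \emph{Lemma 2} shows that enlarging $S$ strictly increases the SNDR as long as the affine piece stays inside $(0,1)$; pushing this to its limit makes $S$ exactly the maximal feasible interval $\{\gamma : 0<\gamma/\eta+\beta<1\}$, i.e. $(-\beta\eta,\eta-\beta\eta)$ when $\eta>0$ (with the affine piece hitting $0$ at the left endpoint and $1$ at the right), and the reversed interval when $\eta<0$. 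Finally, \emph{Lemma 3} orients the saturated sets: for $\eta>0$, $L$ lies to the left in $\mathds{R}^-$ and $U$ to the right in $\mathds{R}^+$, and conversely for $\eta<0$. Combining these three facts yields precisely the piecewise responses (\ref{solution1}) and (\ref{solution2}).

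With the breakpoints now written as functions of $\eta$ and $\beta$, I would substitute these integration limits back into the \emph{Lemma 1} formulas (\ref{etan}) and (\ref{beta}). Because the limits $-\beta\eta$ and $\eta-\beta\eta$ themselves contain $\eta$ and $\beta$, the equations become self-referential, and the abstract quantities $C_{num}^{set}$ of (\ref{Cdenotation}) specialize to the explicit integrals $C_0^{U^\star},C_0^{S^\star},C_0^{L^\star},C_1^{U^\star},C_1^{S^\star}$ listed in the theorem. This is exactly the transcendental system (\ref{eta_star})--(\ref{beta_star}), whose solution $(\eta^\star,\beta^\star)$ determines the optimal double-sided limiter.

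The last and most delicate step is the closed-form SNDR. I would insert the optimal affine $g$ into (\ref{SNDRg}) and reduce the moments $E[\gamma g(\gamma)]$, $E[g(\gamma)]$, and $E[g^2(\gamma)]$ to combinations of the $C$-integrals over $S^\star$ and $U^\star$ (the $L^\star$ region contributes nothing since $g=0$ there). Using the relations that \emph{define} $\eta^\star$ and $\beta^\star$ to cancel terms, I expect the ratio $\mathrm{SNDR}/(1+\mathrm{SNDR})$ to collapse to the compact quantity $R(\eta^\star,\beta^\star)=C_2^{S^\star}+\eta^\star C_1^{U^\star}+\eta^\star\beta^\star C_1^{S^\star}$, which rearranges to the stated $\mathrm{SNDR}^\star = 1/(1/R-1)$. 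The main obstacle is precisely this algebraic reduction: the numerator and denominator of (\ref{SNDRg}) are each sums of several $C$-integrals over multiple regions, and the cancellation into the clean $R$ form succeeds only after the self-consistency equations (\ref{eta_star})--(\ref{beta_star}) are used to eliminate the denominator terms, so the bookkeeping must be carried out carefully and the two sign cases $\eta^\star>0$ and $\eta^\star<0$ checked in parallel.
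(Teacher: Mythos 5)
Your proposal is correct and takes essentially the same route as the paper: the paper's proof of \emph{Theorem 1} is precisely the assembly of \emph{Lemma 1} (affine shape on $S$), \emph{Lemma 2} (maximality of $S$), and \emph{Lemma 3} (orientation of $L$ and $U$), with the transcendental system (\ref{eta_star})--(\ref{beta_star}) obtained by self-substitution of the breakpoints into (\ref{etan}) and (\ref{beta}). The algebraic collapse to $\mathrm{SNDR}^{\star}=1/(1/R(\eta^{\star},\beta^{\star})-1)$ that you flag as the delicate step is exactly the identity $\mathrm{SNDR}^{-1}=(C_2^S+\eta C_1^U+\eta\beta C_1^S)^{-1}-1$ already established inside the paper's proof of \emph{Lemma 2}, using the self-consistency equations to eliminate $\sigma_v^2/A^2$, just as you anticipated.
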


\begin{proof}
See the proofs of \emph{Lemma 1}, \emph{Lemma 2} and \emph{Lemma 3}.
\end{proof}

\emph{Theorem 1} establishes that the nonlinearity in the shape of Fig.~\ref{fig5:subfig} is optimal.

\begin{figure}[htbp]
\centering
\subfigure[$\eta^{\star}>0$]{
\label{fig5:subfig:a} 
\includegraphics[width=2.961in]{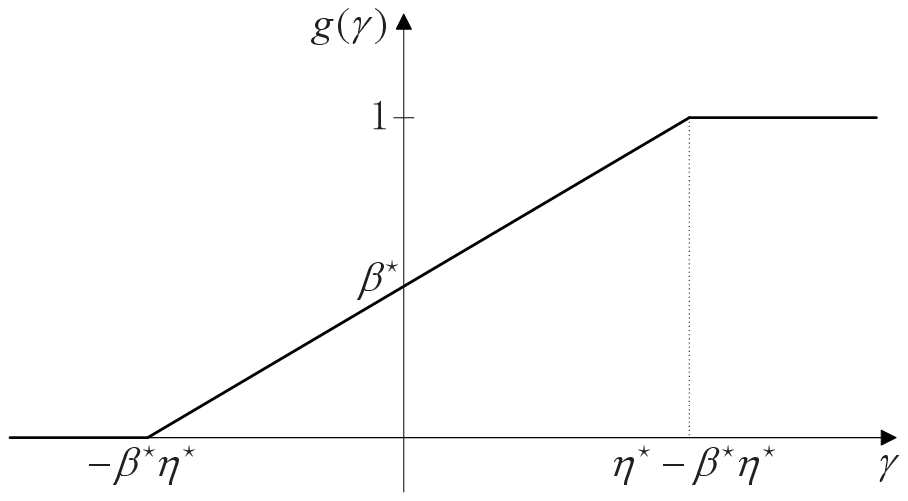}}
\hspace{0.2in}
\subfigure[$\eta^{\star}<0$]{
\label{fig5:subfig:b} 
\includegraphics[width=3.039in]{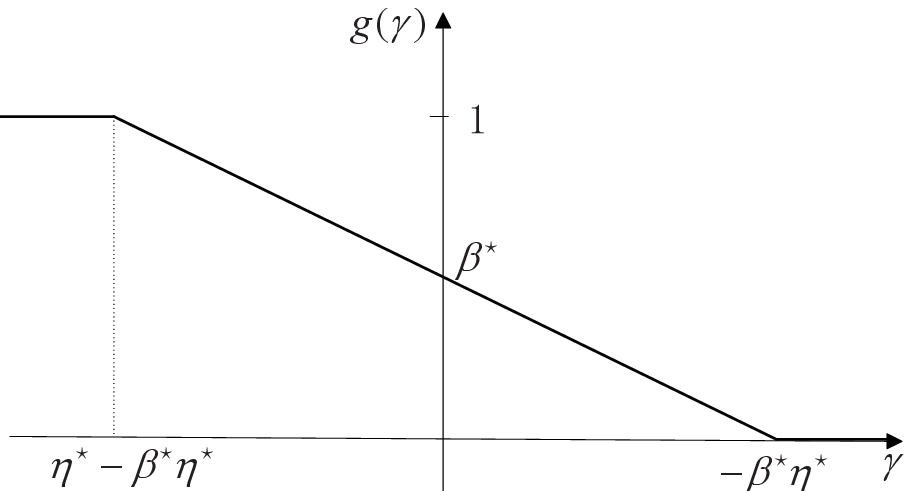}}
\caption{Illustration of optimal $g(\cdot)$ functions to maximize the SNDR}
\label{fig5:subfig} 
\end{figure}

Predistortion is a well-known linearization strategy in many applications such as RF amplifier linearization.
For the dynamic range constrained nonlinearities like LED electrical-to-optical conversion, predistortion has been proposed to mitigate the nonlinear effects.
Specifically, given a system nonlinearity $u(\cdot)$, it is possible to apply a predistortion mapping $f(\cdot)$ so the overall response is linear.
According to \emph{Theorem 1}, it is best to make $u(f(\cdot))$ equal to the $g(\cdot)$ function given in (\ref{solution1}) or (\ref{solution2}) if $u(\cdot)$ is normalized with dynamic range constraint $0 \leq u(\cdot) \leq 1$.
Using the analytical tools presented above, we can answer the questions regarding the selection of the gain factor $1/\eta$, DC biasing $\beta$ and the clipping regions on both sides, or equivalently, the sets $L$ and $U$.
\emph{Theorem 1} shows that these optimal parameters (in terms of SNDR) depend on the PDF of $\gamma$ and the dynamic signal-to-noise ratio $\mathrm{DSNR}=A^2/\sigma_v^2$.
Thus, our work can serve as a guideline for the system design.
In the next subsection, examples are given to illustrate the calculations of the optimal factors $\eta^{\star}$ and $\beta^{\star}$.

\subsection{Examples for selections of optimal parameters}
In the last subsection, we learned that the optimal factors $\eta^{\star}$ and $\beta^{\star}$ can be calculated by solving two transcendental equations (\ref{eta_star}) and (\ref{beta_star}).
However, there may not be closed-form expressions for the solutions.
Additionally, solving (\ref{eta_star}) and (\ref{beta_star}) may result in multiple solutions, but we only keep the real-valued ones since all the signals here are real-valued.

Here, let us take into account a specific class of input signals whose distributions exhibit axial symmetry, such as uniform distribution and Gaussian distribution.
When the distribution of the input signal is axial symmetric, the optimal clipping regions $L^{\star}$ and $U^{\star}$ are also symmetric.
Thus, $C_0^{U^{\star}}=C_0^{L^{\star}}$, $C_1^{U^{\star}}=-C_1^{L^{\star}}$ and $C_1^{S^{\star}}=0$.
Then the factors $\beta^{\star}$ and $\eta^{\star}$ can be calculated:

\begin{equation}
\beta^{\star}=\frac{C_0^{U^{\star}}C_1^{U^{\star}}}{C_0^{U^{\star}}C_1^{U^{\star}}+C_0^{L^{\star}}C_1^{U^{\star}}}=0.5,
\end{equation}

\begin{equation}
\eta^{\star}=\frac{2C_0^{U^{\star}}C_1^{U^{\star}}}{(C_0^{U^{\star}})^2+2C_0^{U^{\star}}\sigma_v^2/A^2}
=\frac{2C_1^{U^{\star}}}{C_0^{U^{\star}}+2\sigma_v^2/A^2}.
\end{equation}

We see that the DC biasing will be the midpoint of the dynamic range.
When the gain factor $\eta^{\star}>0$, it can be further expressed as:

\begin{equation}\label{etastarequ}
\eta^{\star}=\frac{2\int_{0.5\eta^{\star}}^{+\infty}{\gamma p(\gamma)d\gamma}}{\int_{0.5\eta^{\star}}^{+\infty}{p(\gamma)d\gamma}+2\sigma_v^2/A^2}.
\end{equation}

When the gain factor $\eta^{\star}<0$, it can expressed as:

\begin{equation}\label{etastarequneg}
\eta^{\star}=\frac{2\int_{-\infty}^{0.5\eta^{\star}}{\gamma p(\gamma)d\gamma}}{\int_{-\infty}^{0.5\eta^{\star}}{p(\gamma)d\gamma}+2\sigma_v^2/A^2}.
\end{equation}

There is still no closed-form expression for gain factor $\eta^{\star}$.
Next, as examples, let us consider the calculations for uniform distribution and Gaussian distribution specifically.

\begin{example}
When the original signal $x_o(t)$ is uniformly distributed in the interval $[\mu_x-b,\mu_x+b]$,
we infer that the normalized signal $\gamma$ is uniformly distributed in the interval $[-\sqrt{3},\sqrt{3}]$ with the PDF

\begin{equation}
\begin{split}
p(\gamma)=\quad
\begin{cases}
\quad \frac{1}{2\sqrt{3}}, &-\sqrt{3} \leq \gamma \leq \sqrt{3}, \\
\quad 0, &\text{otherwise}.
\end{cases}
\end{split}
\end{equation}

\end{example}

For the case with $\eta^{\star} > 0$, it is straightforward to calculate
\begin{eqnarray}
C_1^{U^{\star}}&=&\int_{0.5\eta^{\star}}^{\sqrt{3}}{\gamma \frac{1}{2\sqrt{3}}d\gamma}=\frac{1}{4\sqrt{3}}(3-\frac{1}{4}{\eta^{\star}}^2)\label{c1uu},\\
C_0^{U^{\star}}&=&\int_{0.5\eta^{\star}}^{\sqrt{3}}{\frac{1}{2\sqrt{3}}d\gamma}=\frac{\sqrt{3}-0.5\eta^{\star}}{2\sqrt{3}}\label{c0uu}.
\end{eqnarray}

Substituting (\ref{c1uu}) and (\ref{c0uu}) into (\ref{etastarequ}), we obtain

\begin{equation}\label{quadratic}
\eta^{\star}=\frac{\frac{1}{2\sqrt{3}}(3-\frac{1}{4}{\eta^{\star}}^2)}{\frac{\sqrt{3}-0.5\eta^{\star}}{2\sqrt{3}}+2\sigma_v^2/A^2}.
\end{equation}

Equation (\ref{quadratic}) can be rewritten as a quadratic equation
\begin{equation}\label{etastarequex1}
{\eta^{\star}}^2-(16\sqrt{3}\sigma_v^2/A^2+4\sqrt{3})\eta^{\star}+12=0.
\end{equation}

Thus, we can obtain a closed-form solution for the optimal $\eta^{\star}$:
\begin{equation}\label{etasolex1}
\eta^{\star}=8\sqrt{3}\sigma_v^2/A^2+2\sqrt{3}-4\sqrt{12\sigma_v^4/A^4+6\sigma_v^2/A^2}.
\end{equation}

We know that there should be two solutions for equation (\ref{etastarequex1}).
In fact, the other solution is $0.5\eta^{\star}>\sqrt{3}$, which means that both $C_0^{U^{\star}}$ and $C_1^{U^{\star}}$ are 0.
Thus, the solution given by (\ref{etasolex1}) is the unique optimal selection for the gain factor $\eta^{\star} > 0$.
If $\eta^{\star}<0$ is desired, the optimal solution is
\begin{equation}\label{etasolex12}
\eta^{\star}=-8\sqrt{3}\sigma_v^2/A^2-2\sqrt{3}+4\sqrt{12\sigma_v^4/A^4+6\sigma_v^2/A^2}.
\end{equation}

\begin{example}
When the original signal $x_o(t)$ is Gaussian distributed,
then the normalized signal $\gamma$ has a standard Gaussian distribution with the PDF

\begin{equation}
p(\gamma)=\frac{1}{\sqrt{2\pi}}e^{-\frac{1}{2}\gamma^2}.
\end{equation}

\end{example}

For the case with $\eta^{\star} > 0$, we have
\begin{eqnarray}
C_1^{U^{\star}}&=&\int_{0.5\eta^{\star}}^{+\infty}{\gamma \frac{1}{\sqrt{2\pi}}e^{-\frac{1}{2}\gamma^2}}=\frac{1}{\sqrt{2\pi}}e^{-\frac{1}{8}{\eta^{\star}}^2}\label{c1ug},\\
C_0^{U^{\star}}&=&\frac{1}{2}-\frac{1}{2} erf(\frac{\eta^{\star}}{2\sqrt{2}})\label{c0ug}
\end{eqnarray}
where $erf(\cdot)$ is the error function with the definition
\begin{equation}
erf(z)=\frac{1}{\sqrt{\pi}}\int_{-z}^z{e^{-\gamma^2}}d\gamma.
\end{equation}

Substituting (\ref{c1ug}) and (\ref{c0ug}) into (\ref{etastarequ}) and simplifying, we obtain

\begin{equation}
\eta^{\star}(\frac{1}{2}-\frac{1}{2} erf(\frac{\eta^{\star}}{2\sqrt{2}})+2\sigma_v^2/A^2)=\frac{2}{\sqrt{2\pi}}e^{-\frac{1}{8}{\eta^{\star}}^2}.
\end{equation}

Here the optimal $\eta^{\star}$ does not have a closed-form expression but can be easily calculated numerically.
We can draw the similar conclusion for the case with $\eta^{\star} < 0$.

\subsection{Numerical results}
Fig.~\ref{eta} shows the optimal $\eta^{\star}$ as a function of DSNR for the above examples.

\begin{figure}[htbp]
\centering
\includegraphics[width=3.5in]{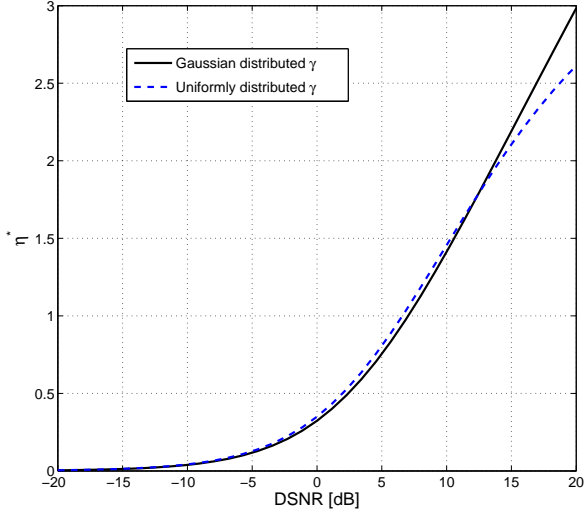}
\caption{Optimal gain factor $\eta^{\star}$ as a function of DSNR for \emph{Example 1} and \emph{Example 2} with $\eta^{\star} > 0$.}
\label{eta}
\end{figure}

Next, we illustrate the SNDR of two different nonlinear mappings.
$g_1(\gamma)$ is the optimal solution chosen by \emph{Theorem 1}. $g_2(\gamma)$ is a fixed mapping given below:

\begin{equation}
\begin{split}
g_2(\gamma)=\quad
\begin{cases}
\,\,0, &\gamma \leq -0.4, \\
\,\,\gamma+0.4, &-0.4 \leq \gamma \leq 0.6,\\
\,\,1, & \gamma \geq 0.6.
\end{cases}
\end{split}
\end{equation}
The corresponding SNDR curves are shown in Fig.~\ref{SNDR}.
This example illustrates that the nonlinearity $g_1(\gamma)$ yields a higher SNDR as compared to the other nonlinearity, as expected according to \emph{Theorem 1}.

\begin{figure}[htbp]
\centering
\includegraphics[width=3.5in]{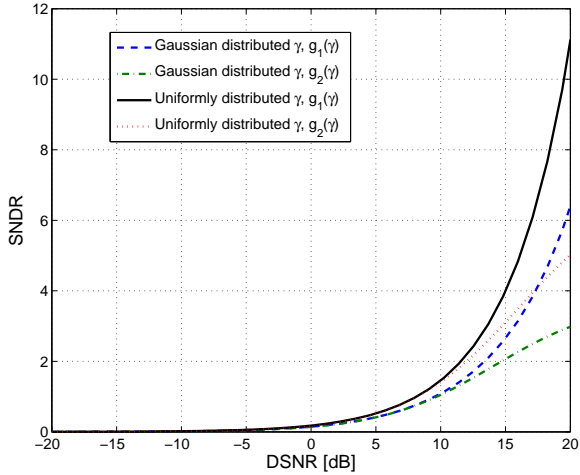}
\caption{SNDR for uniformly and gaussian distributed $\gamma$ with different nonlinear mappings.}
\label{SNDR}
\end{figure}

\section{Relationship Between SNDR and Capacity}

\subsection{Lower Bound on Capacity}
The capacity is given by
\begin{equation}
C=\max_{p_{x_o}}{I(y_o;x_o)}=\max_{p_{x}}{I(y;x)}
\end{equation}
where $I(y;x) = H(x)-H(x|y)=H(y)-H(y|x)$ is the mutual information between $y$ and $x$~\cite{IT}.
To obtain the capacity of the dynamic range constrained channel, we need to solve the following optimization problem:

\begin{eqnarray}
 &\smash{\displaystyle\max_{p_{x}, h(\cdot)}}& {\quad I(y;x)}\notag \\
 &s.t.& \quad 0 \leq h(\cdot) \leq A
\end{eqnarray}
for a specific zero-mean noise with variance $\sigma_v^2$.
Moreover, it can be simplified as:

\begin{eqnarray}
 &\smash{\displaystyle\max_{p_{x_s}}}& {\quad I(x_s+v;x_s)}\notag \\
 &s.t.& \quad 0 \leq x_s \leq A
\end{eqnarray}
which means that we need to find an input distribution in the interval $[0,A]$ to maximize the mutual information.
Specially, when the noise $v$ is Gaussian, the issue is similar to Smith's work in \cite{smith}.
In this case, if DSNR is low, the capacity is achieved by an equal pair of mass points at $0$ and $A$;
if DSNR is high, the asymptotic capacity is the same as the information rate due to a uniformly distributed input in $[0,A]$~\cite{smith}.

However, in most cases, we are most interested in the achievable data rate given a nonlinear channel mapping with any input and any noise.
Similar to the work in~\cite{RF4}, we obtain a lower bound on the information rate:

\begin{eqnarray}
& & I(y;x) \notag \\
&\geq& H(x)-\frac{1}{2}\log(2\pi e \sigma_x^2)+\frac{1}{2}\log\left(\frac{\sigma_y^2}{\sigma_y^2-\frac{\sigma_{xy}^2}{\sigma_x^2}}\right) \\
&=& H(x)-\frac{1}{2}\log(2\pi e \sigma_x^2) \notag\\
 &+&\frac{1}{2}\log\left(\frac{\frac{A^2}{\sigma_v^2} var[g(\gamma)]+1}{\frac{A^2}{\sigma_v^2} var[g(\gamma)]+1-\frac{A^2}{\sigma_v^2}E^2[\gamma g(\gamma)]}\right) \notag \\
&=& H(x)-\frac{1}{2}\log(2\pi e \sigma_x^2)+\frac{1}{2}\log(1+\mathrm{SNDR})
\end{eqnarray}
by referring to (\ref{SNDRg}).
Since $C \geq I(y;x)$ for any input distribution $p_x$, by setting $p_x$ to be the PDF of a zero-mean Gaussian r.v., we obtain

\begin{eqnarray}
C \geq \frac{1}{2}\log(1+\mathrm{SNDR})
\end{eqnarray}
with the SNDR evalutated for a Gaussian $x$.

\subsection{Upper Bound on Capacity}
In this subsection, we find an upper bound for the capacity.
Similar to~\cite{RF4}, supposing $p_y^{\ast}$ is the PDF of $y$ that maximizes the capacity, i.e.,
\begin{eqnarray}
p_y^{\ast}=\arg \max_{p_y}[H(y)-H(y|x)].
\end{eqnarray}

We can write the capacity as
\begin{eqnarray}
C = I(y;x)|_{p_y^{\ast}} &=& H(y)|_{p_y^{\ast}}-H(y|x)\notag\\
&=& H(y)|_{p_y^{\ast}}-H(v)
\end{eqnarray}

Next, we bound the entropy $H(y)$ with the entropy of a Gaussian $y$, yielding
\begin{eqnarray}
C &\leq& \frac{1}{2}\log(2\pi e \sigma_y^2)-H(v)\notag\\
&=& \frac{1}{2}\log(2\pi e \sigma_y^2)-\frac{1}{2}\log(2\pi e \sigma_v^2)+\frac{1}{2}\log(2\pi e \sigma_v^2)-H(v)\notag\\
&=& \frac{1}{2} \log \left(1+\frac{A^2 var[g(\gamma)]}{\sigma_v^2}\right)+\frac{1}{2}\log(2\pi e \sigma_v^2)-H(v) \notag\\
&\leq& \frac{1}{2} \log \left(1+\frac{A^2}{4\sigma_v^2}\right)+\frac{1}{2}\log(2\pi e \sigma_v^2)-H(v)
\end{eqnarray}
where $var[g(\gamma)] \leq \frac{1}{4}$ with $g(\gamma) \in [0,1]$.
Specifically, if the noise is Gaussian, we have the upper bound:
\begin{eqnarray}
C \leq \frac{1}{2}\log\left(1+\frac{A^2}{4\sigma_v^2}\right)
\end{eqnarray}

Since $\varepsilon_d \geq 0$ and $\alpha^2\sigma_x^2 \leq var[h(\gamma)] \leq \frac{1}{4}A^2$, we must have
\begin{equation}
\mathrm{SNDR} = \frac{\alpha^2\sigma_x^2}{\varepsilon_d + \sigma_v^2} \leq \frac{A^2}{4\sigma_v^2}.
\end{equation}
$\frac{A^2}{\sigma_v^2}$ is the defined DSNR which is the same as that in \cite{YU}.

\subsection{Example of Bounds}
Since SNDR is determined by DSNR and the distribution of signal, we plot the bounds as functions of DSNR for Gaussian distributed signal, which is shown in Fig.~\ref{bound}.
We also compare the lower bounds given by two different nonlinear mappings $g_1(\gamma)$ and $g_2(\gamma)$, which are introduced in the last section.
This example illustrates that the nonlinearity $g_1(\gamma)$ chosen according to \emph{Theorem 1} yields a tighter lower bound as compared to the other nonlinearity.
In addition, we can see that the capacity of Gaussian channel as determined by Smith \cite{smith} is between the lower bounds and upper bound that we have.

\begin{figure}[htbp]
\centering
\includegraphics[width=3.5in]{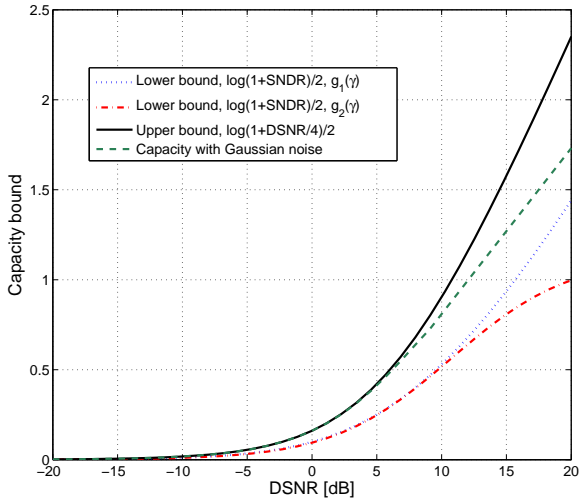}
\caption{Bounds on capacity.}
\label{bound}
\end{figure}

\section{Conclusion}
The main contribution of this paper is the SNDR optimization within the family of dynamic range constrained memoryless nonlinearities.
We showed that, under the dynamic range constraint, the optimal nonlinear mapping that maximizes the SNDR is a double-sided limiter with a particular gain and a particular bias level, which are determined based on the distribution of the input signal and the DSNR.
In addition, we found that $\frac{1}{2}\log(1+\mathrm{SNDR})$ provides a lower bound on the nonlinear channel capacity, and $\frac{1}{2}\log(1+\frac{1}{4}\mathrm{DSNR})$ serves as the upper bound.
The results of this paper can be applied for optimal linearization of nonlinear components and efficient transmission of signals with double-sided clipping.


%

\appendices
\section{Proof of \emph{Lemma 1}}\label{L1proof}
Since we are solving the optimization problem w.r.t. a function, the functional derivative is introduced here~\cite{RF4}\cite{FD}.
By using the Dirac delta function $\delta(\cdot)$ as a test function, the notion of functional derivative is defined as:

\begin{equation}\label{functionalderivative}
\frac{\delta{F[g(\gamma)]}}{\delta{g(\gamma_0)}} = \lim_{\epsilon \rightarrow 0} {\frac{F[g(\gamma)+\epsilon \delta(\gamma-\gamma_0)]-F[g(\gamma)]}{\epsilon}}.
\end{equation}

Just as the variable derivative operation, the linear property, product rule and chain rule hold for functional derivative.
In addition, from (\ref{functionalderivative}), we infer that

\begin{eqnarray}
\frac{\delta{g(\gamma)}}{\delta{g(\gamma_0)}} &=& \delta(\gamma-\gamma_0),\\
\frac{\delta{g^2(\gamma)}}{\delta{g(\gamma_0)}} &=& 2g(\gamma) \delta(\gamma-\gamma_0).
\end{eqnarray}

To maximize the SNDR w.r.t $g(\cdot)$, we need
\begin{equation}
\label{eq:condition}
\frac{\delta \mathrm{SNDR}}{\delta g(\gamma_0)}=0,\quad \forall \gamma_0 \in S.
\end{equation}

We infer that
\begin{equation}\label{Eg}
\begin{split}
E [g(\gamma)] &= E[I_L(\gamma)g(\gamma)] + E[I_S(\gamma)g(\gamma)] + E[I_U(\gamma)g(\gamma)]\\
&= E[I_S(\gamma )g(\gamma)] + E[I_U(\gamma)]\\
&= E[I_S(\gamma )g(\gamma)] + C_0^U.
\end{split}
\end{equation}
Similarly,
\begin{eqnarray}
E [\gamma g(\gamma)] &=& E[I_S(\gamma)\gamma g(\gamma)] + C_1^U \label{EgammaG}, \\
E [g^2(\gamma)] &=& E[I_S(\gamma)g^2(\gamma)] + C_0^U \label{Egs}.
\end{eqnarray}

$C_0^U$ and $C_1^U$ are defined as in (\ref{Cdenotation}). It follows easily that
\begin{eqnarray}
C_0^L+C_0^S+C_0^U &=& 1,\\
C_1^L+C_1^S+C_1^U &=& 0
\end{eqnarray}
and
\begin{equation}
C_0^L, C_0^S, C_0^U  \geq 0.
\end{equation}

Substituting (\ref{Eg}), (\ref{EgammaG}) and (\ref{Egs}) into (\ref{SNDRg})
\begin{equation}
\mathrm{SNDR} = \frac{N[g(\gamma)]}{D[g(\gamma)]}
\end{equation}
where
\begin{equation}
Q[g(\gamma)] = E[I_S(\gamma )\gamma g(\gamma)] + C_1^U,
\end{equation}
\begin{equation}
N[g(\gamma)] = Q^2[g(\gamma)],
\end{equation}
\begin{equation}
Y[g(\gamma)] = E[I_S(\gamma)g(\gamma)] + C_0^U,
\end{equation}
\begin{equation}
\begin{split}
D[g(\gamma)] &= E[I_S(\gamma )g^2(\gamma)] + C_0^U + \frac{\sigma_v^2}{A^2}  \\
&- Q^2[g(\gamma)]- Y^2[g(\gamma)].
\end{split}
\end{equation}
Denote by $p(\gamma)$ the PDF of the random variable $\gamma$. Then
\begin{equation}
E[I_S(\gamma )g^2(\gamma)] = \int I_S(\gamma )g^2(\gamma)p(\gamma)d\gamma.
\end{equation}
Taking the functional derivative w.r.t $g(\gamma_0)$, we obtain
\begin{eqnarray}
& & \frac{\delta E[I_S(\gamma )g^2(\gamma)]}{\delta g(\gamma_0)} \notag \\
&=& \int I_S(\gamma )2g(\gamma)\delta(\gamma-\gamma_0)p(\gamma)d\gamma\\
&=& 2g(\gamma_0)p(\gamma_0).
\end{eqnarray}
Similarly,
\begin{equation}
\frac{\delta E[I_S(\gamma)\gamma g(\gamma)]}{\delta g(\gamma_0)} = \gamma_0p(\gamma_0),
\end{equation}
\begin{equation}
\frac{\delta E[I_S(\gamma)g(\gamma)]}{\delta g(\gamma_0)} =p(\gamma_0).
\end{equation}
Therefore,
\begin{equation}
\frac{\delta N[g(\gamma)]}{\delta g(\gamma_0)} = 2Q[g(\gamma)]\gamma_0p(\gamma_0),
\end{equation}
\begin{equation}
\frac{\delta D[g(\gamma)]}{\delta g(\gamma_0)} = 2g(\gamma_0)p(\gamma_0) - 2Q[g(\gamma)]\gamma_0p(\gamma_0) - 2Y[g(\gamma)]p(\gamma_0).
\end{equation}

Condition (\ref{eq:condition}) requires
\begin{equation}
\frac{\delta N[g(\gamma)]}{\delta g(\gamma_0)}D[g(\gamma)] = \frac{\delta D[g(\gamma)]}{\delta g(\gamma_0)}N[g(\gamma)].
\end{equation}
Substituting and simplifying, we obtain
\begin{equation}\label{ggamma0}
g(\gamma_0) = \frac{\gamma_0}{\eta} + \beta
\end{equation}
where
\begin{equation}\label{eta0L1}
 \eta = \frac{E[I_S(\gamma )\gamma g(\gamma)] + C_1^U}{E[I_S(\gamma )g^2(\gamma)] + C_0^U - \beta^2 + \sigma_v^2/A^2},
\end{equation}
\begin{equation}\label{beta0L1}
\beta = E [g(\gamma)] = E[I_S(\gamma )g(\gamma)] + C_0^U
\end{equation}
as the solution for (\ref{eq:condition}). Since (\ref{ggamma0}) holds $\forall \gamma_0 \in S$, we must have
\begin{equation}\label{ggamma}
g(\gamma) = \frac{\gamma}{\eta} + \beta, \quad \forall \gamma \in S.
\end{equation}

Substituting (\ref{ggamma}) into (\ref{eta0L1}) and (\ref{beta0L1}), we obtain
\begin{equation}\label{eta1}
\eta = \frac{C_1^U + C_2^S/\eta + C_1^S\beta}{C_0^U + C_2^S/\eta^2 + 2\beta C_1^S/\eta +\beta^2C_0^S - \beta^2 + \sigma_v^2/A^2},
\end{equation}
\begin{equation}\label{beta1}
\beta = C_0^U + C_1^S/\eta + \beta C_0^S
\end{equation}
where $C_0^S$, $C_1^S$ and $C_2^S$ are given by (\ref{Cdenotation}).

Solving for $\eta$ and $\beta$, we further simplify them to (\ref{etan}) and (\ref{beta}).

In summary, under the dynamic range constraint, the optimal $g(\cdot)$ that maximizes the SNDR is given by (\ref{ggamma}), where $\eta$ and $\beta$ are given by (\ref{etan}) and (\ref{beta}).

\section{Proof of \emph{Lemma 2}}\label{L2proof}

Comparing (\ref{Sdefinition}) with (\ref{ggamma}), we infer that $0 < \frac{\gamma}{\eta} + \beta < 1$ on $S$.
Therefore, the set $S$ must be a subset of $S^{\star}=(-\beta\eta,\eta-\beta\eta)$ if $\eta>0$ or $S^{\star}=(\eta-\beta\eta,-\beta\eta)$ if $\eta<0$.
The objective here is to determine the optimal $S$ such that the SNDR is maximized.

To further this objective, we rewrite SNDR as

\begin{equation}\label{SNDR1}
\mathrm{SNDR}^{-1} = \frac{E[g^2(\gamma)]-E^2[g(\gamma)]+\frac{\sigma_v^2}{A^2}}{E^2[\gamma g(\gamma)]}-1.
\end{equation}

Since $g(\gamma) = \frac{\gamma}{\eta} + \beta$ for $\gamma \in S$, we infer that
\begin{eqnarray}
E [g(\gamma)] &=& C_0^U + C_1^S/\eta + \beta C_0^S,\\
E [g^2(\gamma)] &=& C_0^U + C_2^S/\eta^2 + 2\beta C_1^S/\eta +\beta^2C_0^S,\\
E [\gamma g^2(\gamma)] &=& C_1^U + C_2^S/\eta + C_1^S\beta.
\end{eqnarray}

From (\ref{eta1}), we have

\begin{equation}
\begin{split}
\sigma_v^2/A^2 &= C_1^U/\eta + C_2^S/\eta^2 + C_1^S\beta/\eta + \beta^2\\
&- C_0^U - C_2^S/\eta^2 - 2\beta C_1^S/\eta -\beta^2C_0^S.
\end{split}
\end{equation}

Thus, (\ref{SNDR1}) can be further simplified to
\begin{eqnarray}
\mathrm{SNDR}^{-1} &=& \frac{C_1^U/\eta + C_2^S/\eta^2 + C_1^S\beta/\eta}{(C_1^U + C_2^S/\eta + C_1^S\beta)^2}-1\\
&=& (C_2^S + \eta C_1^U + \eta C_1^S\beta)^{-1}-1.
\end{eqnarray}

As a result, the original problem can be written as

\begin{eqnarray}
 &\smash{\displaystyle\max_{L, S, U}}& {\quad C_2^S + \eta C_1^U + \eta C_1^S\beta}\notag \\
 &s.t.& \quad L\cup S\cup U = R,  \\
 & & \quad S \subseteq (-\beta\eta, \eta-\beta\eta) \quad \mathrm{or} \quad (\eta-\beta\eta, -\beta\eta). \notag
\end{eqnarray}

Recall that $C_2^S$, $C_1^S$, $C_1^U$, $\eta$ and $\beta$ are all functions
of $L$, $S$ and $U$. Set
\begin{eqnarray}
& & R(L,S,U) \notag \\
 &=& C_2^S + \eta(L,S,U) \beta(L,S,U) C_1^S + \eta(L,S,U) C_1^U \notag\\
&=& \frac{N_0(L,S,U)}{D_0(L,S,U)}
\end{eqnarray}
where
\begin{equation}
\begin{split}
& \quad N_0(L,S,U)\\
& =  C_2^S C_0^UC_0^L +C_0^U(C_1^S)^2 + 2C_0^UC_1^UC_1^S + (C_1^U)^2 \\
& - C_0^S(C_1^U)^2 + C_2^S(1-C_0^S)\sigma_v^2/A^2 + (C_1^S)^2\sigma_v^2/A^2
\end{split}
\end{equation}
and
\begin{equation}
D_0(L,S,U)=C_0^UC_0^L + (1-C_0^S)\sigma_v^2/A^2.
\end{equation}

Differing from the traditional optimization problem, the variables here are sets.
Let us consider two cases.

\begin{case}
Suppose that ($L$, $S$, $U$) is a feasible solution. Let us consider a set $S_1 \subset S$ and
\begin{eqnarray}
S_1 &=& S - \Delta_1,\\
L_1 &=& L + \Delta_1,\\
U_1 &=& U
\end{eqnarray}
which means a subset of $S$ is partitioned into $L$.
\end{case}

\newcounter{mytempeqncnt4}
\begin{figure*}[t]
\normalsize
\setcounter{mytempeqncnt4}{\value{equation}}
\setcounter{equation}{108}
\begin{equation}\label{com1}
\begin{split}
& \quad \hat{N}_1(L_1,S_1,U_1)D_0(L,S,U)-N_0(L,S,U)D_1(L_1,S_1,U_1)\\
&= ((C_2^SC_0^U+(C_1^U)^2)D_0(L,S,U)-C_0^UN_0(L,S,U))C_0^{\Delta_1}+ (C_2^SD_0(L,S,U)-N_0(L,S,U))C_0^{\Delta_1}\sigma_v^2/A^2\\
&- C_0^UC_0^LD_0(L,S,U)C_2^{\Delta_1}-(1-C_0^S)D_0(L,S,U)C_2^{\Delta_1}\sigma_v^2/A^2+ 2C_0^UC_1^LD_0(L,S,U)C_1^{\Delta_1}-2C_1^SD_0(L,S,U)C_1^{\Delta_1}\sigma_v^2/A^2\\
&= 2(C_0^UC_1^L-C_1^S\sigma_v^2/A^2)(C_0^UC_0^L + (1-C_0^S)\sigma_v^2/A^2)C_1^{\Delta_1}-(C_1^S\sigma_v^2/A^2-C_0^UC_1^L)^2C_0^{\Delta_1}- (C_0^UC_0^L+(1-C_0^S)\sigma_v^2/A^2)^2C_2^{\Delta_1}\\
&\leq 2|C_0^UC_1^L-C_1^S\sigma_v^2/A^2|(C_0^UC_0^L + (1-C_0^S)\sigma_v^2/A^2)|C_1^{\Delta_1}|-(C_1^S\sigma_v^2/A^2-C_0^UC_1^L)^2C_0^{\Delta_1}- (C_0^UC_0^L+(1-C_0^S)\sigma_v^2/A^2)^2C_2^{\Delta_1}\\
&= 2\underbrace{|C_0^UC_1^L-C_1^S\sigma_v^2/A^2|(C_0^UC_0^L + (1-C_0^S)\sigma_v^2/A^2)}_{\geq 0}\underbrace{(|C_1^{\Delta_1}|-\sqrt{C_0^{\Delta_1}}\sqrt{C_2^{\Delta_1}})}_{\leq 0}\\
&- \underbrace{(|C_1^S\sigma_v^2/A^2-C_0^UC_1^L|\sqrt{C_0^{\Delta_1}}-(C_0^UC_0^L+(1-C_0^S)\sigma_v^2/A^2)\sqrt{C_2^{\Delta_1}})^2}_{\geq 0}\\
&\leq 0
\end{split}
\end{equation}

\setcounter{equation}{101}
\hrulefill
\vspace*{4pt}
\end{figure*}

\begin{figure}[htbp]
\centering
\subfigure[$L$, $S$, $U$]{
\label{case1:subfig:a} 
\includegraphics[width=3in]{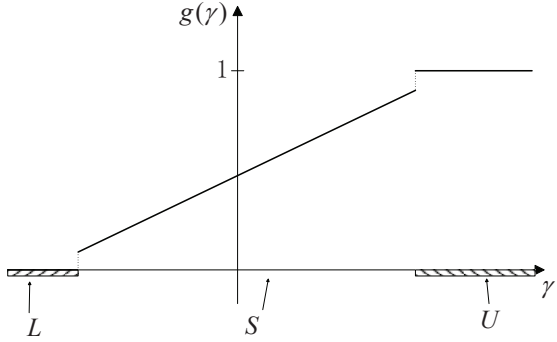}}
\hspace{0.2in}
\subfigure[$L_1=L+\Delta_1$, $S_1=S-\Delta_1$, $U_1=U$]{
\label{case1:subfig:b} 
\includegraphics[width=3in]{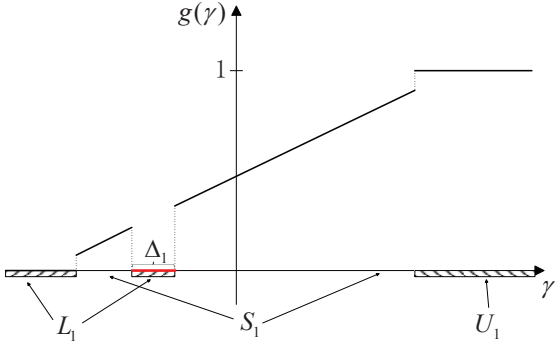}}
\caption{Example of \emph{Case 1}}
\label{case1:subfig} 
\end{figure}

Fig.~\ref{case1:subfig} demonstrates an example of \emph{Case 1}.
Then we have
\begin{equation}
R(L_1,S_1,U_1) = \frac{N_1(L_1,S_1,U_1)}{D_1(L_1,S_1,U_1)}
\end{equation}
where

\begin{equation}
\begin{split}
& \quad N_1(L_1,S_1,U_1)\\
&=  (C_2^S-C_2^{\Delta_1}) C_0^U(C_0^L+C_0^{\Delta_1}) +C_0^U(C_1^S-C_1^{\Delta_1})^2 \\
&+ 2C_0^U(C_1^S-C_1^{\Delta_1})C_1^U  - (C_0^S-C_0^{\Delta_1})(C_1^U)^2\\
&+ (C_2^S-C_2^{\Delta_1})(1-C_0^S+C_0^{\Delta_1})\sigma_v^2/A^2 \\
&+ (C_1^S-C_1^{\Delta_1})^2\sigma_v^2/A^2 + (C_1^U)^2\\
&= N_0(L,S,U) + C_2^SC_0^UC_0^{\Delta_1} - C_0^UC_0^LC_2^{\Delta_1}  \\
&-C_0^UC_2^{\Delta_1}C_0^{\Delta_1} - 2C_0^UC_1^SC_1^{\Delta_1}+C_0^U(C_1^{\Delta_1})^2 \\
&- 2C_0^UC_1^UC_1^{\Delta_1}+(C_1^U)^2C_0^{\Delta_1}-C_2^{\Delta_1}C_0^{\Delta_1}+(C_1^{\Delta_1})^2\\
&+ (C_2^SC_0^{\Delta_1}-(1-C_0^S)C_2^{\Delta_1}-2C_1^SC_1^{\Delta_1})\sigma_v^2/A^2
\end{split}
\end{equation}
and
\begin{equation}
\begin{split}
& \quad D_1(L_1,S_1,U_1) \\
&= C_0^U(C_0^L+C_0^{\Delta_1}) + (1-C_0^S+C_0^{\Delta_1})\sigma_v^2/A^2\\
&= D_0(L,S,U)+C_0^UC_0^{\Delta_1}+C_0^{\Delta_1}\sigma_v^2/A^2.
\end{split}
\end{equation}

Next, we would like to compare $R(L_1,S_1,U_1)$ and $R(L,S,U)$ to help us establish the optimal $S$ maximizing the SNDR.
However, it is a challenge to make the comparison directly since there are too many terms in the objective expression.
Here, we utilize a two-step comparison.

First, rewrite
\begin{equation}
\begin{split}
& \quad N_1(L_1,S_1,U_1)\\
&= N_0(L,S,U) + (C_2^SC_0^U+(C_1^U)^2)C_0^{\Delta_1} - C_0^UC_0^LC_2^{\Delta_1} \\
&- 2C_0^U(C_1^S+C_1^U)C_1^{\Delta_1}+C_0^U((C_1^{\Delta_1})^2 - C_2^{\Delta_1}C_0^{\Delta_1})\\
&+ (C_2^SC_0^{\Delta_1}-(1-C_0^S)C_2^{\Delta_1}-2C_1^SC_1^{\Delta_1})\sigma_v^2/A^2\\
&+((C_1^{\Delta_1})^2-C_2^{\Delta_1}C_0^{\Delta_1})\sigma_v^2/A^2\\
&\leq N_0(L,S,U) + (C_2^SC_0^U+(C_1^U)^2)C_0^{\Delta_1} \\
&- C_0^UC_0^LC_2^{\Delta_1} +2C_0^UC_1^LC_1^{\Delta_1}\\
&+ (C_2^SC_0^{\Delta_1}-(1-C_0^S)C_2^{\Delta_1}-2C_1^SC_1^{\Delta_1})\sigma_v^2/A^2\\
&= \hat{N}_1(L_1,S_1,U_1)
\end{split}
\end{equation}
where
\begin{equation}
(C_1^{\Delta_1})^2 \leq C_2^{\Delta_1}C_0^{\Delta_1}
\end{equation}
by the Cauchy-Schwartz inequality $(E[\theta\phi]) \leq E[\theta^2]E[\phi^2]$ with $\theta=\gamma I_{\Delta_1}(\gamma)$ and $\phi= I_{\Delta_1}(\gamma )$.

Next, we use $\hat{N}_1(L_1,S_1,U_1)$ instead of $N_1(L_1,S_1,U_1)$ to make the comparison.
Consider

\begin{equation}
\begin{split}
& \quad \frac{\hat{N}_1(L_1,S_1,U_1)}{D_1(L_1,S_1,U_1)}-\frac{N_0(L,S,U)}{D_0(L,S,U)}  \\
&=\frac{\hat{N}_1(L_1,S_1,U_1)D_0(L,S,U)-N_0(L,S,U)D_1(L_1,S_1,U_1)}{D_1(L_1,S_1,U_1)D_0(L,S,U)}
\end{split}
\end{equation}
where
\begin{equation}
\hat{N}_1(L_1,S_1,U_1)D_0(L,S,U)-N_0(L,S,U)D_1(L_1,S_1,U_1 \leq 0
\end{equation}
which is given by (\ref{com1}).

\setcounter{equation}{109}

Since both $D_1(L_1,S_1,U_1)$ and $D_0(L,S,U)$ are greater than zero, it can be concluded

\begin{equation}
R(L_1,S_1,U_1) \leq \frac{\hat{N}_1(L_1,S_1,U_1)}{D_1(L_1,S_1,U_1)} \leq R(L,S,U).
\end{equation}

\emph{Case 1} demonstrates that the SNDR will be decreased if any subset of $S$ is occupied by $L$. Let us consider another case.

\begin{case}
$S_2 \subset S$ and
\begin{eqnarray}
S_2 &=& S - \Delta_2,\\
L_2 &=& L,\\
U_2 &=& U  + \Delta_2
\end{eqnarray}
which means a subset of $S$ is partitioned into $U$.
\end{case}

\newcounter{mytempeqncnt3}
\begin{figure*}[t]
\normalsize
\setcounter{mytempeqncnt4}{\value{equation}}
\setcounter{equation}{120}
\begin{equation}\label{com2}
\begin{split}
& \quad \hat{N}_2(L_2,S_2,U_2)D_0(L,S,U)-N_0(L,S,U)D_2(L_2,S_2,U_2)\\
&= ((C_2^SC_0^L+(C_1^L)^2)D_0(L,S,U)-C_0^LN_0(L,S,U))C_0^{\Delta_2}+ (C_2^SD_0(L,S,U)-N_0(L,S,U))C_0^{\Delta_2}\sigma_v^2/A^2\\
&- C_0^UC_0^LD_0(L,S,U)C_2^{\Delta_2}-(1-C_0^S)D_0(L,S,U)C_2^{\Delta_2}\sigma_v^2/A^2+ 2C_0^LC_1^UD_0(L,S,U)C_1^{\Delta_2}-2C_1^SD_0(L,S,U)C_1^{\Delta_2}\sigma_v^2/A^2\\
&= 2(C_0^LC_1^U-C_1^S\sigma_v^2/A^2)(C_0^UC_0^L + (1-C_0^S)\sigma_v^2/A^2)C_1^{\Delta_1}- (C_1^S\sigma_v^2/A^2-C_0^LC_1^U)^2C_0^{\Delta_2}-(C_0^UC_0^L+(1-C_0^S)\sigma_v^2/A^2)^2C_2^{\Delta_2}\\
&\leq 2|C_0^LC_1^U-C_1^S\sigma_v^2/A^2|(C_0^UC_0^L + (1-C_0^S)\sigma_v^2/A^2)|C_1^{\Delta_2}|- (C_1^S\sigma_v^2/A^2-C_0^LC_1^U)^2C_0^{\Delta_2}-(C_0^UC_0^L+(1-C_0^S)\sigma_v^2/A^2)^2C_2^{\Delta_2}\\
&= 2\underbrace{|C_0^LC_1^U-C_1^S\sigma_v^2/A^2|(C_0^UC_0^L + (1-C_0^S)\sigma_v^2/A^2)}_{\geq 0}\underbrace{(|C_1^{\Delta_2}|-\sqrt{C_0^{\Delta_2}}\sqrt{C_2^{\Delta_2}})}_{\leq 0}\\
&- \underbrace{(|C_1^S\sigma_v^2/A^2-C_0^LC_1^U|\sqrt{C_0^{\Delta_2}}-(C_0^UC_0^L+(1-C_0^S)\sigma_v^2/A^2)\sqrt{C_2^{\Delta_2}})^2}_{\geq 0}\\
&\leq 0
\end{split}
\end{equation}

\setcounter{equation}{113}
\hrulefill
\vspace*{4pt}
\end{figure*}

\begin{figure}[htbp]
\centering
\subfigure[$L$, $S$, $U$]{
\label{case2:subfig:a} 
\includegraphics[width=3in]{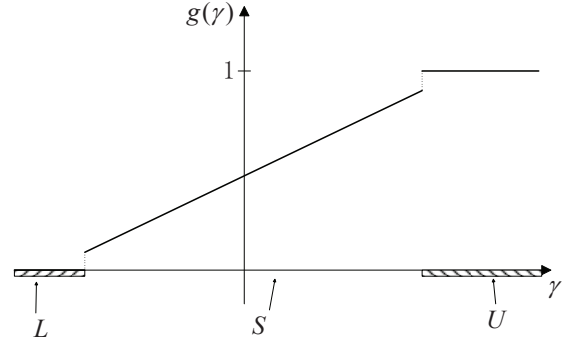}}
\hspace{0.2in}
\subfigure[$L_2=L$, $S_2=S-\Delta_2$, $U_2=U+\Delta_2$]{
\label{case2:subfig:b} 
\includegraphics[width=3in]{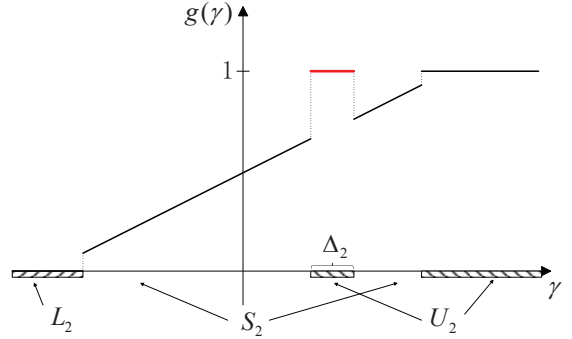}}
\caption{Example of \emph{Case 2}}
\label{case2:subfig} 
\end{figure}

Fig.~\ref{case2:subfig} demonstrates an example of \emph{Case 2}.
Then we have
\begin{equation}
R(L_2,S_2,U_2) = \frac{N_2(L_2,S_2,U_2)}{D_2(L_2,S_2,U_2)}
\end{equation}
where

\begin{equation}
\begin{split}
& \quad N_2(L_2,S_2,U_2)\\
&=  (C_2^S-C_2^{\Delta_2}) (C_0^U+C_0^{\Delta_2})C_0^L +(C_0^U+C_0^{\Delta_2})(C_1^S-C_1^{\Delta_2})^2\\
&+ 2(C_0^U+C_0^{\Delta_2})(C_1^U+C_1^{\Delta_2})(C_1^S-C_1^{\Delta_2})\\
&+ (C_1^U+C_1^{\Delta_2})^2 - (C_0^S-C_0^{\Delta_2})(C_1^U+C_1^{\Delta_2})^2\\
&+ (C_2^S-C_2^{\Delta_2})(1-C_0^S+C_0^{\Delta_2})\sigma_v^2/A^2 + (C_1^S-C_1^{\Delta_2})^2\sigma_v^2/A^2\\
&= N_0(L,S,U) + C_2^SC_0^LC_0^{\Delta_2} - C_0^UC_0^LC_2^{\Delta_2} -C_0^LC_2^{\Delta_2}C_0^{\Delta_2} \\
&- 2C_0^UC_1^SC_1^{\Delta_2}+C_0^U(C_1^{\Delta_2})^2+(C_1^S)^2C_0^{\Delta_2}-2C_1^SC_0^{\Delta_2}C_1^{\Delta_2}\\
&+ C_0^{\Delta_2}(C_1^{\Delta_2})^2+2C_0^UC_1^SC_1^{\Delta_2}+2C_1^UC_1^SC_0^{\Delta_2}\\
&+ 2C_1^SC_0^{\Delta_2}C_1^{\Delta_2}- 2C_0^UC_1^UC_1^{\Delta_2}-2C_0^U(C_1^{\Delta_2})^2\\
&- 2C_1^UC_0^{\Delta_2}C_1^{\Delta_2} - 2C_0^{\Delta_2}(C_1^{\Delta_2})^2+ 2C_1^UC_1^{\Delta_2}+(C_1^{\Delta_2})^2\\
&- 2C_0^SC_1^UC_1^{\Delta_2}-C_0^S(C_1^{\Delta_2})^2 + (C_1^U)^2C_0^{\Delta_2}+2C_1^UC_0^{\Delta_2}C_1^{\Delta_2}\\
&+ C_0^{\Delta_2}(C_1^{\Delta_2})^2+ (C_2^SC_0^{\Delta_2}-(1-C_0^S)C_2^{\Delta_2}-C_2^{\Delta_2}C_0^{\Delta_2}\\
&+ (C_1^{\Delta_2})^2v - 2C_1^SC_1^{\Delta_2})\sigma_v^2/A^2
\end{split}
\end{equation}
and
\begin{equation}
\begin{split}
&\quad D_2(L_2,S_2,U_2) \\
&= (C_0^U+C_0^{\Delta_2})C_0^L + (1-C_0^S+C_0^{\Delta_2})\sigma_v^2/A^2\\
&= D_0(L,S,U)+C_0^LC_0^{\Delta_2}+C_0^{\Delta_2}\sigma_v^2/A^2.
\end{split}
\end{equation}

In \emph{Case 2}, we also try to determine the difference between $R(L_2,S_2,U_2)$ and $R(L,S,U)$ by utilizing the two-step comparison.

First, rewrite
\begin{equation}
\begin{split}
&N_2(L_2,S_2,U_2)\\
&= N_0(L,S,U) + (C_2^SC_0^L+(C_1^U+C_1^S)^2)C_0^{\Delta_2} - C_0^UC_0^LC_2^{\Delta_2} \\
&+ 2C_0^LC_1^UC_1^{\Delta_2}+C_0^L((C_1^{\Delta_2})^2 - C_2^{\Delta_1}C_0^{\Delta_1})\\
&+ (C_2^SC_0^{\Delta_2}-(1-C_0^S)C_2^{\Delta_2}-2C_1^SC_1^{\Delta_2})\sigma_v^2/A^2\\
&+((C_1^{\Delta_2})^2-C_2^{\Delta_2}C_0^{\Delta_2})\sigma_v^2/A^2\\
&\leq N_0(L,S,U) + (C_2^SC_0^L+(C_1^L)^2)C_0^{\Delta_2} \\
&- C_0^UC_0^LC_2^{\Delta_2}+2C_0^LC_1^UC_1^{\Delta_2} \\
&+ (C_2^SC_0^{\Delta_2}-(1-C_0^S)C_2^{\Delta_2}-2C_1^SC_1^{\Delta_2})\sigma_v^2/A^2\\
&= \hat{N}_2(L_2,S_2,U_2)
\end{split}
\end{equation}
where
\begin{equation}
(C_1^{\Delta_2})^2 \leq C_2^{\Delta_2}C_0^{\Delta_2}
\end{equation}
by the Cauchy-Schwartz inequality.

Second, consider

\begin{equation}
\begin{split}
& \quad \frac{\hat{N}_2(L_2,S_2,U_2)}{D_2(L_2,S_2,U_2)}-\frac{N_0(L,S,U)}{D_0(L,S,U)}\\
&=\frac{\hat{N}_2(L_2,S_2,U_2)D_0(L,S,U)-N_0(L,S,U)D_2(L_2,S_2,U_2)}{D_2(L_2,S_2,U_2)D_0(L,S,U)}
\end{split}
\end{equation}
where
\begin{equation}
\hat{N}_2(L_2,S_2,U_2)D_0(L,S,U)-N_0(L,S,U)D_2(L_2,S_2,U_2)\leq 0
\end{equation}
which is given by (\ref{com2}).

\setcounter{equation}{121}

Since both $D_2(L_2,S_2,U_2)$ and $D_0(L,S,U)$ are greater than zero, it can be concluded

\begin{equation}
R(L_2,S_2,U_2) \leq \frac{\hat{N}_2(L_2,S_2,U_2)}{D_2(L_2,S_2,U_2)} \leq R(L,S,U).
\end{equation}

\emph{Case 2} demonstrates that the SNDR will be also decreased if any subset of $S$ is occupied by $U$.

Additionally, \emph{Case 1} and \emph{Case 2} also imply that the SNDR can be increased if $S$ can be enlarged by occupying the subsets of $L$ and $U$.
Thus, \emph{Lemma 2} holds and the optimal $S$ is implied to be $S^{\star}=(-\beta\eta,\eta-\beta\eta)$ if $\eta>0$ or $S^{\star} = (\eta-\beta\eta,-\beta\eta)$ if $\eta<0$.



\ifCLASSOPTIONcaptionsoff
  \newpage
\fi

\end{document}